\newcommand{\ud}{\,\mathrm{d}}
\newtheorem{defn}{Definition}
\newtheorem{theorem}{Theorem}
\begin{document}

\title{Energy Harvesting-Aided Spectrum Sensing and Data Transmission in Heterogeneous Cognitive Radio Sensor Network}
%
\author{\normalsize{
Deyu Zhang,
Zhigang Chen,~\IEEEmembership{ Member,~IEEE,}
Ju Ren,~\IEEEmembership{Student Member,~IEEE,}
Ning Zhang,~\IEEEmembership{Member,~IEEE,}
Mohamad Khattar Awad,~\IEEEmembership{Member,~IEEE,}
Haibo Zhou,~\IEEEmembership{ Member,~IEEE,}
Xuemin (Sherman) Shen,~\IEEEmembership{Fellow,~IEEE}

\thanks{Copyright (c) 2015 IEEE. Personal use of this material is permitted. However, permission to use this material for any other purposes must be obtained from the IEEE by sending a request to pubs-permissions@ieee.org.

D. Zhang and J. Ren are with the School of Information Science and Engineering, Central South University, Changsha, China, 410083. D. Zhang and J. Ren are also visiting scholars at the University of Waterloo now.(e-mail:\{zdy876,~ren\_ju\}@csu.edu.cn)}%
\thanks{Z. Chen is with the School of Software, Central South University, Changsha, China, 410083, and Z. Chen is the corresponding author (e-mail: czg@mail.csu.edu.cn).}
\thanks{M. K. Awad is with the Computer Engineering Department at Kuwait University, Kuwait City, Kuwait (e-mail: mohamad@ieee.org).}
\thanks{N. Zhang, H. Zhou and X. (S.) Shen are with the Department of Electrical and Computer Engineering, University of Waterloo, Canada, N2L 3G1 (e-mail: \{n35zhang, h53zhou, xshen\}@uwaterloo.ca).}
}
}

\maketitle
%
%
\begin{abstract}
The incorporation of Cognitive Radio (CR) and Energy Harvesting (EH) capabilities in wireless sensor networks enables spectrum and energy efficient heterogeneous cognitive radio sensor networks (HCRSNs). The new networking paradigm of HCRSNs consists of EH-enabled spectrum sensors and battery-powered data sensors. Spectrum sensors can cooperatively scan the licensed spectrum for available channels, while data sensors monitor an area of interest and transmit sensed data to the sink over those channels. In this work, we propose a resource allocation solution for the HCRSN to achieve the sustainability of spectrum sensors and conserve energy of data sensors. The proposed solution is achieved by two algorithms that operate in tandem, a spectrum sensor scheduling algorithm and a data sensor resource allocation algorithm. The spectrum sensor scheduling algorithm allocates channels to spectrum sensors such that the average detected available time for the channels is maximized, while the EH dynamics are considered and PU transmissions are protected. The data sensor resource allocation algorithm allocates the transmission time, power and channels such that the energy consumption of the data sensors is minimized. Extensive simulation results demonstrate that the energy consumption of the data sensors can be significantly reduced while maintaining the sustainability of the spectrum sensors.
\end{abstract}
\begin{IEEEkeywords}
Wireless sensor network, energy harvesting, cognitive radio, energy efficiency, multiple channels
\end{IEEEkeywords}
\IEEEpeerreviewmaketitle
\section{Introduction}
Wireless sensor networks (WSNs) have become a prevalent solution to a wide range of applications including environmental monitoring, patient monitoring and smart homes \cite{Borges2014}. Typically, WSN uses the unlicensed Industrial, Scientific, and Medical (ISM) band for data transmission. However, with the exponential growth in the number of wireless devices operating in this band, WSNs suffer from severe interference \cite{Akan2009}. Cognitive Radio (CR) has emerged as a promising technology to allow secondary unlicensed users to opportunistically access the underutilized spectrum that is licensed to the primary users (PUs) \cite{Zhang2014}. Therefore, CR can reduce the interference and improve spectrum utilization. The integration of CR functions into WSNs leads to Cognitive Radio Sensor Network (CRSN).

In CRSN, spectrum sensors frequently scan the spectrum to obtain higher-resolution estimates of the spectrum availability and guarantee PU protection against interference \cite{Chang2012}. However, this frequent scanning increases the energy consumption of an energy-constrained network, which traditionally operates powered by batteries. Consequently, energy conservation becomes a critical design issue for CRSNs \cite{Ren2015, Ren2016, Su2016}. Energy harvesting (EH) is considered as one of the effective approaches for improving the energy efficiency of WSNs. EH-enabled sensors can harvest energy from either radio signals or ambient energy sources which enable them to operate continuously without battery replacement \cite{Zhang2015a}. In the literature, extensive research efforts have been devoted to improving the energy efficiency of CRSNs. Energy-efficient cooperative spectrum sensing is investigated in \cite{Shah2015, Deng2012}. Shah et al. limit the number of sensors that perform spectrum sensing to minimize the energy consumption by exploiting the spatial correlation of the sensors \cite{Shah2015}. Deng et al. investigate the network lifetime extension of dedicated sensor networks for spectrum sensing \cite{Deng2012}. Despite the importance of these efforts, limits remain for improving the energy efficiency of battery-powered data sensors with low data sensing and limited data transmission rates, for two reasons. First, unlike data sensors, spectrum sensors perform spectrum scanning at a much higher rate than data sensing which depletes the battery energy much faster than the data sensors. Second, harvested energy is sporadic and unstable, whereas battery-stored energy is static and stable, which makes schemes that are developed for battery-powered sensors inapplicable for EH-enabled sensors.

In addition to inefficient spectrum and energy utilization, inaccurate spectrum sensing is another limitation of traditional sensor networks. The spectrum-scanning results of a single spectrum sensor are prone to detection error due to the spatially large-scale effect of shadowing and small-scale effect of multipath fading \cite{Akyildiz2011}.  Alternatively, cooperative spectrum sensing can be performed to enhance the accuracy of spectrum sensing \cite{Zhang2014a}. In cooperative spectrum sensing, multiple spectrum sensors sense the same channel and coordinate their decisions on the availability of a given channel. Hence, the incorporation of energy harvesting and cognitive radio techniques in addition to cooperative spectrum sensing brings major improvements to traditional WSNs. Energy-efficient cooperative spectrum sensing has been the focus of several research activities. In \cite{Cheng2012}, Cheng et al. schedule a group of spectrum sensors between the active and inactive states to improve the performance of spectrum sensing. Considering the impact of frequent state switching on sensors' stability, the schedule minimizes the sensors' switching frequency among the states. In \cite{Zhang2015}, Zhang et al. design a distributed cooperative spectrum sensing scheme, wherein the Secondary Users (SUs) only exchange their measurements with the one-hop neighbours. In \cite{Khan2010}, Khan et al. propose a selection scheme to find the sensors with the best detection performance for cooperative spectrum sensing, without requiring a priori knowledge of the primary-user-signal-to-noise ratio (SNR). In \cite{Liu2013a}, Liu et al. propose an ant colony-based algorithm for a dedicated sensor network, whereby spectrum sensing is performed to support the operation of a secondary network. Throughput of the secondary network is optimized by scheduling the spectrum-sensing activities according to the residual energy of each sensor. Additionally, to achieve energy efficient cooperative spectrum sensing, parameters are optimized, such as the detection threshold~\cite{Ebrahimzadeh2015}, sensing duration~\cite{Eryigit2013}, and switch cost~\cite{Bayhan2013}. EH-aided cooperative spectrum sensing should take the EH dynamics of the sensors into consideration. For EH-aided spectrum sensing, the objective is to explore as many of the available channels as possible while maintaining the sustainability of the spectrum sensors, considering the diverse energy-harvesting capabilities of the spectrum sensors. However, the aforementioned energy-efficient cooperative spectrum-sensing schemes which only focus on the minimization of the energy consumption of spectrum sensing, cannot be directly applied to EH-aided cooperative spectrum sensing.

In this paper, we propose a resource allocation solution to address the gaps that are identified above in the existing works, namely, spectrum under-utilization, energy inefficiency and spectrum-sensing inaccuracy. Specifically, for a heterogeneous cognitive radio sensor network (HCRSN) that is composed of EH-enabled spectrum sensors and battery-powered data sensors, we develop a solution that can jointly guarantee the sustainability of spectrum sensors, the energy efficiency of the data sensors and the accuracy of spectrum sensing. The HCRSN operates over two phases, i.e., a spectrum-sensing phase followed by a data transmission phase. In the spectrum-sensing phase, EH-enabled spectrum sensors cooperatively sense the spectrum to detect underutilized channels that are licensed to the primary network. Spectrum-sensing scheduling is optimized to maximize the detected channel's available time considering the dynamics of EH. In the data transmission phase, the available channels are utilized by data sensors for sensed data transmission. We combine the resource management and allocation of each phase in a unified solution. Despite the physical independence of spectrum sensors and data sensors, a unified solution is necessary to optimize the overall energy efficiency and performance of HCRSNs. The imbalance of energy replenishment and consumption at either the spectrum or data sensors results in nodes failure and deteriorates the network performance; thus, energy should be managed under one unified setup. Furthermore, the performance of spectrum sensor scheduling in the first phase highly impacts the energy efficiency of data sensors in the second phase. A longer channel available time detected in the first phase increases the channel access time and decreases the probability of collision in the second phase. This causal impact of spectrum sensor scheduling performance in the first phase on the performance of data sensor resource allocation in the second phase necessitates a unified solution. On the other hand, it is practically infeasible to allocate channels, transmission time, and transmission power before the spectrum sensors identifies the available channels. Therefore, spectrum sensor scheduling and data sensor resource allocation have to be addressed over two coupled problems operating in tandem but under one unified setup. Summarily, the contributions of this paper are twofold:
\begin{enumerate}
\item We formulate the EH-aided spectrum-sensing problem as a nonlinear integer programming problem and propose a Cross-Entropy-based algorithm to maximize the average available time of the channel under the protection for PUs.
\item We propose a joint time and power allocation algorithm to minimize the energy consumption of the data sensors, based on the analysis of the channel fading and the exponential \textit{ON-OFF} model of the PUs' behavior.
\end{enumerate}

It is imperative to mention here that the literature schemes, which consider only energy minimization of spectrum sensors \cite{Shah2015,Deng2012, Khan2010, Shah2014}  rather than channel available time maximization are inapplicable in HCRSN. Furthermore, unlike existing solutions that separately consider channels allocation \cite{Byun2008} \cite{Hu2011} and power control \cite{AyalaSolares2012} \cite{Naeem2013}, the proposed solution jointly allocates time, frequency and power to data sensors; hence, improve the energy efficiency of data sensors.

The remainder of this paper is organized as follows. The network architecture and cognitive radio model are detailed in Section \ref{sec:system_model}. A mathematical formulation and the proposed solutions of the spectrum sensor scheduling problem and data sensors resource allocation problem are detailed in Section \ref{Sec:problem_and_solution}. Performance evaluation results that demonstrate the efficiency of the proposed algorithms are presented in Section \ref{sec:evaluation}. Conclusions are drawn in Section \ref{sec:conlusion}.
\section{System Model} \label{sec:system_model}

\subsection{Network Architecture}
\begin{figure}[ht!]
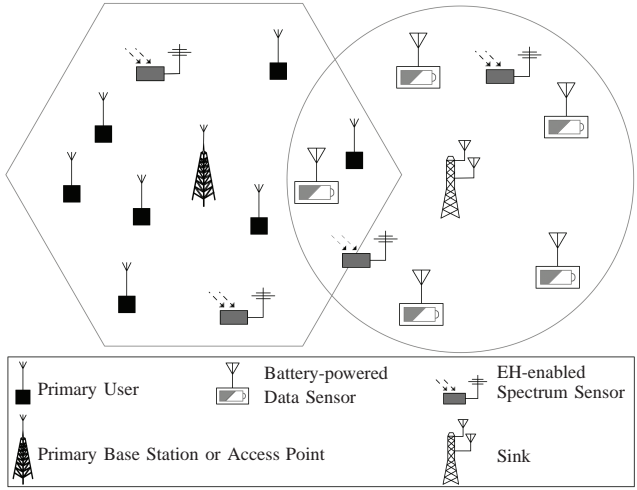

\begin{center}
\begin{lpic}{Figure_1(0.43,0.43)} 
\lbl[l]{10,30; \scriptsize Primary User}
\lbl[l]{10,10; \scriptsize Primary Base Station or Access Point}
\lbl[l]{80,35; \scriptsize Battery-powered}
\lbl[l]{80,28; \scriptsize Data Sensor}
\lbl[l]{152,10; \scriptsize Sink}
\lbl[l]{152,36; \scriptsize EH-enabled}
\lbl[l]{152,29; \scriptsize Spectrum Sensor}
\end{lpic}
\end{center}
\caption{An illustration of heterogeneous cognitive radio sensor network. \label{fig_netm}}
\end{figure}
We consider a HCRSN that consists of three types of nodes: $N$ battery-powered data sensors, $M$ EH-enabled spectrum sensors and a sink node, as shown in Fig. \ref{fig_netm}.  The HCRSN coexists with a network of PUs that have access to the licensed spectrum. The licensed spectrum is divided into $K$ orthogonal channels that have equal bandwidth $W$. Spectrum sensors are deployed to sense and identify available channels that are not utilized by PUs, whereas data sensors collect data from an area of interest. The data is then transmitted over the available channels to the sink.

The considered HCRSN operates as follows: First, the sink schedules spectrum sensors to detect the PUs' presence over channels using energy detection \cite{Zhang2014}. A PU is determined to be active, i.e., channel unavailable, if at least one scheduled spectrum sensor reports it to be present on the channel \cite{Deng2012}. The energy consumption of a spectrum sensor used to detect one channel is determined by $e_s = \tau_s \cdot P_s$, where $P_s$ is the power consumption of the spectrum sensing. We assume that the EH rate is known a priori and is stable over $T$ \cite{Zhang2013}. To guarantee the sustainability of the $m$-th spectrum sensor, its energy consumption should not exceed the amount of harvested energy in one period, $\pi_m \cdot T$, where $\pi_m$ denotes the average EH rate of the $m$-th spectrum sensor. Second, the sink assigns the available channels to the data sensors for data transmission.

\begin{table}[!t]
    \caption{The Key Notations}
    \centering
    \small
    \begin{tabular}{p{1.5cm}|p{6.6cm}}
         \hline
         \hline 
         \textbf{Notation} & \textbf{Definition} \\
         \hline
         \hline
         $M$ & Number of energy harvesting spectrum sensors \\
         $N$ & Number of data sensors\\
         $K$ & Number of licensed channels\\
         $U$ & Number of samples for detecting one channel\\
         $B$ & Number of transceivers mounted on the sink \\
         $F_f, F_d$ & Probability of cooperative false alarm and detection\\
         $\bar{\gamma}_{m,k}$ & Signal to Noise Ratio (SNR) from $PU_k$ to EH spectrum sensor $m$  \\
         $T$ & Time length of each period\\
         $\tau_0$ & Time length of spectrum sensing phase\\
         $\pi_m$ & Energy harvesting rate of EH spectrum sensor $m$ \\
         $\tau$ & Time length of per spectrum sensing  \\
         $P_s$ & power for spectrum sensing\\
         $e_s$ & Energy consumption per spectrum sensing \\
         $\lambda_k$ &  Transition rate of PU from state ON to state OFF\\
         $\mu_k$ &  Transition rate of PU from state OFF to state ON \\
         $P_{ON}^k, P_{OFF}^k$ &  Stationary probability of PU on channel $k$ is in state ON and OFF \\
         $L_{ON}^k, L_{OFF}^k$ &  Sojourn time of PU on channel $k$ in state ON and OFF   \\
         $D_n$ & Required amount of data transmitted to sink\\
         $h_{n,k}$ & Channel fading of data sensor $n$ through channel $k$\\
         $d_n$ & Distance from data sensor $n$ to the sink\\
         $W$ &  Channel band in Hz\\
         $\eta_k$ & Collision probability on channel $k$ \\
         $t_{n,k}$ &  transmission time of data sensor $n$ on channel $k$\\
         $P_{n,k}$ &  transmission power of data sensor $n$ on channel $k$\\
         $\overline{MD}_d$ & Threshold of mis-detection probability \\
         \hline
    \end{tabular}
    \label{tab_symbol}
\end{table}
\begin{figure}[ht!]
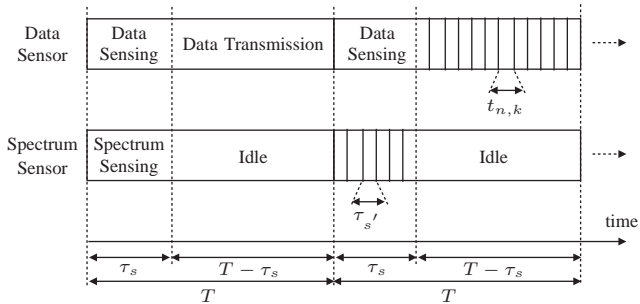

\begin{center}
\begin{lpic}{Figure_2(1,1)} 
\lbl[.]{13.5,36; \scriptsize Data}
\lbl[.]{13.5,33; \scriptsize Sensing}
\lbl[.]{30,34.5; \scriptsize Data Transmission}
\lbl[.]{13.5,21; \scriptsize Spectrum}
\lbl[.]{13.5,18; \scriptsize Sensing}
\lbl[.]{30,19.5; \scriptsize Idle}
\lbl[.]{13.5,4.5; \scriptsize $\tau_s$}
\lbl[.]{29.5,4.5; \scriptsize $T-\tau_s$}
\lbl[.]{24,1; \scriptsize $T$}
\lbl[.]{46.5,36; \scriptsize Data}
\lbl[.]{46.5,33; \scriptsize Sensing}
\lbl[.]{63.5,26; \scriptsize $t_{n,k}$}
\lbl[.]{45.2,11; \scriptsize $\tau_{s^{'}}$}
\lbl[.]{62,19.5; \scriptsize Idle}
\lbl[.]{46.5,4.5; \scriptsize $\tau_s$}
\lbl[.]{62,4.5; \scriptsize $T-\tau_s$}
\lbl[.]{56,1; \scriptsize $T$}
\lbl[.]{2,21; \scriptsize Spectrum}
\lbl[.]{2,18; \scriptsize Sensor}
\lbl[.]{2,36; \scriptsize Data}
\lbl[.]{2,33; \scriptsize Sensor}
\lbl[.]{79,11; \scriptsize time}
\end{lpic}
\end{center}
\caption{Timing diagram and frame structure of the HCRSN.  \label{fig_tml}}
\end{figure}


Fig. \ref{fig_tml} shows the timing diagram and frame structure of the considered network. The HCRSN operates periodically over time slots of duration $T$. Each time slot is divided into two phases: the spectrum sensing phase and data transmission phase. In the spectrum sensing phase, the spectrum sensors cooperatively identify the presence of PUs, while the data sensors collect information from the area of interest. The duration of the spectrum sensing phase is $\tau_s$, which is further divided into mini-slots of duration $\tau_{s^{'}}$ over which a single spectrum sensor senses one channel.  After the spectrum sensing phase, the sink collects the results from all the scheduled spectrum sensors and estimates the availability of the channels. Then, the sink optimizes the data transmission scheduling of data sensors to conserve their energy. The data sensors transmit data according to the schedule in the subsequent data transmission phase with duration $T - \tau_s$ divided over the time slots of duration $t_{n,k}$ in which the $n$-th data sensor transmits data to the sink over the $k$-th channel.

With respect to the notation, the following holds: a bold-face small-case symbol always refers to a vector; and a non-italic bold-face large-case symbol always symbolizes a matrix.
\subsection{Cognitive Radio Model}
\label{subsec_cr}
All of the channels experience slow and flat Rayleigh fading with similar fading characteristics. The PU behavior over each channel is modeled as a stationary exponential \textit{ON-OFF} random process, in which the \textit{ON/Active} and \textit{OFF/Inactive} states represent the presence and absence of a PU over a channel, respectively . We use $\lambda_k$ to denote the transition rate from the state \textit{Active} to the state \textit{Inactive} on the $k$-th channel and $\mu_k$ to denote the transition rate in the reverse direction. The estimation of  $\lambda_k$ and $\mu_k$ is out of the scope of this work; however, they can be obtained by the channel parameter estimation schemes, similar to the ones proposed in \cite{Tehrani2012} and \cite{Kim2008}.  The channel usage changes from one PU to the other and, hence, affects the transition rates.


Spectrum sensors perform binary hypothesis testing to detect the presence of PU signals over channels. Hypothesis 0 ($\mathcal{H}_0$) proposes that the PU is \textit{Inactive} and the channel is available, while Hypothesis 1 ($\mathcal{H}_1$) proposes that the PU is \textit{Active} and the channel is unavailable. The spectrum sensor receives a sampled version of the PU signal. The number of samples is given by $U=\tau_s f_s$, where $f_s$ is the sampling frequency. An energy detector is applied to measure the energy that is associated with the received signal. The output of the energy detector, i.e., the test statistic, is compared to the detection threshold $\varepsilon$, to make a decision on the state of the PU, \textit{Active} or \textit{Inactive}. The test statistic evaluates to $Y_{m,k} = \frac{1}{U} \sum_{u=1}^U |y_{m,k}(u)|^2$, where $y_{m,k}(u)$ is the $u$-th sample of the received signal at the $m$-th spectrum sensor on the $k$-th channel. We assume that the PU signal is a complex-valued PSK signal and the noise is circularly symmetric complex Gaussian with zero mean and $\sigma^2$ variance \cite{Liang2008}.

The performance of the energy detector is evaluated by the the following performance metrics under hypothesis testing \cite{Atapattu:2014:}:
\begin{itemize}
\item The false alarm probability $p_f (m,k)$: The probability that the $m$-th spectrum sensor detects a PU to be present on the $k$-th channel when it is not present in fact, i.e., $\mathcal{H}_0$ is true. The false alarm probability is given by \cite{Liang2008},
\begin{equation}
\label{pf}
p_f (m,k) = Pr(Y_{m,k} > \varepsilon | \mathcal{H}_0) = Q\left( \left(\frac{\varepsilon}{\sigma^2} - 1 \right) \sqrt{U} \right),
\end{equation}
where $Q(\cdot)$ is the complementary distribution function of the standard Gaussian. Without loss of generality, we set the detection threshold to be the same for all of the spectrum sensors; hence, the false alarm probability becomes fixed for all of the sensors and is denoted by $\bar{p}_f$.
\item  The detection probability $p_d (m,k)$: The probability that the $m$-th spectrum sensor detects the presence of a PU on the $k$-th channel while $\mathcal{H}_1$ is true. This probability was found to be \cite{Liang2008}
\begin{equation}
\small
\label{eqn_pd}
p_d(m, k) = Pr(Y_{m,k} > \varepsilon | \mathcal{H}_1) =Q\left(\frac{Q^{-1}(\bar{p}_f) - \sqrt{U}\gamma_{m,k}}{\sqrt{2 \gamma_{m,k} + 1}} \right),
\end{equation}
\end{itemize}
where $\gamma_{m,k}$ denotes the received signal-to-noise ratio (SNR) from the PU on the $k$-th channel. To reduce the communication overhead and delay, each spectrum sensor sends the final 1-bit decision (e.g., 0 or 1 represents the \textit{Active} or \textit{Inactive} state, respectively) to the sink. The sink makes the final decision on the presence of a PU following the Logic-OR rule \cite{Liang2008,Deng2012}. Under this rule, the PU is considered to be present if at least one of the scheduled sensors reports that it is present. Therefore, the final false alarm probability $F_f^k$ and final detection probability $F_d^k$ can be written as
\begin{eqnarray}
F_f^k  & = & 1- \Pi_{m \in \mathcal{M}_k} (1 - \bar{p}_f) \label{eqn_ff},~\mbox{and}\\
F_d^k & = & 1- \Pi_{m \in \mathcal{M}_k} (1 - p_d(m, k)) \label{eqn_fd},
\end{eqnarray}
where $\mathcal{M}_k$ represents the set of spectrum sensors that is scheduled to detect the $k$-th channel.

\section{Problem Statement and Proposed Solutions\label{Sec:problem_and_solution}}
\begin{figure*}[ht]
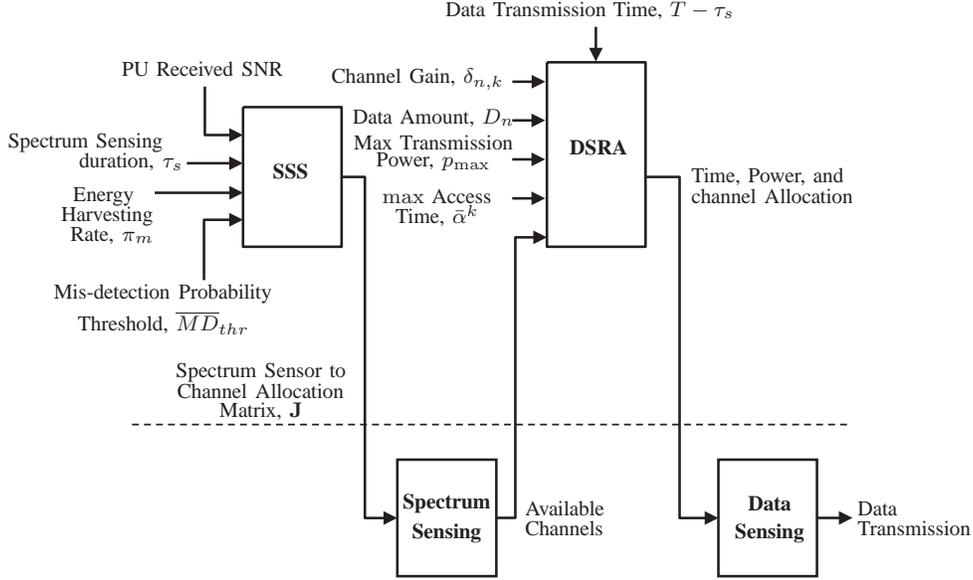

\begin{center}
\normalsize
\setcounter{lpgridstep}{1}
\setcounter{lpcoordstep}{5}
\begin{lpic}[t(0mm),b(0mm)]{Figure_3(1.3,1.3)} 
\lbl[.]{18,44.5; \footnotesize \textbf{SSS}}
\lbl[.]{49.5,47; \footnotesize \textbf{DSRA}}
\lbl[.]{34,10.5; \footnotesize \textbf{Spectrum}}
\lbl[.]{34,7.5; \footnotesize \textbf{Sensing}}
\lbl[.]{67,10.5; \footnotesize \textbf{Data}}
\lbl[.]{67,7.5; \footnotesize \textbf{Sensing}}
\lbl[.]{31,54; \footnotesize Channel Gain, $\footnotesize \delta_{n,k}$}
\lbl[.]{33,50; \footnotesize Data Amount, $D_n$ }
\lbl[.]{33,47.5; \footnotesize Max Transmission } 
\lbl[.]{33,45.5; \footnotesize Power, $p_{\max}$ }
\lbl[.]{33,42; \footnotesize $\max$ Access}
\lbl[.]{33,40; \footnotesize Time, $\bar{\alpha}^{k}$}
\lbl[.]{49,61; \footnotesize Data Transmission Time, $T-\tau_s$ }
\lbl[l]{59,44; \footnotesize Time, Power, and }
\lbl[l]{59,42; \footnotesize channel Allocation}
\lbl[l]{76,10; \footnotesize Data}
\lbl[l]{76,8; \footnotesize Transmission}
\lbl[l]{42,10; \footnotesize Available}
\lbl[l]{42,8; \footnotesize Channels}
\lbl[.]{9,55; \footnotesize PU Received SNR}
\lbl[r]{5,47.5; \footnotesize Spectrum Sensing}
\lbl[r]{7,45.5; \footnotesize duration, $\tau_s$}
\lbl[r]{2,42; \footnotesize Energy}
\lbl[r]{4,40; \footnotesize Harvesting}
\lbl[r]{4,38; \footnotesize Rate, $\pi_m$}
\lbl[.]{5,32; \footnotesize Mis-detection Probability}
\lbl[.]{5,29; \footnotesize Threshold, $\overline{MD}_{thr}$}
\lbl[.]{15,24; \footnotesize Spectrum Sensor to}
\lbl[.]{15,22; \footnotesize Channel Allocation}
\lbl[.]{15,20; \footnotesize Matrix, ${\bf J}$}
\end{lpic}
\end{center}

\caption{A block diagram of the proposed system. The dashed line separates the optimization plane from the sensing plane. \label{fig_flowchart}}
\end{figure*}

, which shows the scheduling and resource allocation problems, the spectrum-sensing and data-sensing phases, and the data flows among them. The dashed line seperates the optimization plane from the sensing plane

In an HCRSN with the above-described architecture, cognitive radio models and EH dynamics, the problems of scheduling the spectrum sensors and allocating the resources for the data sensors become challenging. In the first problem, the spectrum sensor scheduling (SSS) problem, the sink schedules the spectrum sensors to sense the presence of the PUs over the channels in such a way that the channel availability is maximized while respecting the EH dynamics and PUs' priorities in accessing the channels. Solving this problem makes the available channels known to the sink which allocates them to the battery-powered data sensors along with the transmission time and power allocation, with the objective of minimizing the data sensors' energy consumption. This resource allocation problem is referred to as the data sensor resource allocation (DSRA) problem.

Fig. \ref{fig_flowchart} shows the two problems in tandem, the spectrum-sensing and data-sensing phases, and the data flows among them. In the following two subsections, we present problem formulations and solutions for both problems. The first problem is formulated as a nonlinear integer programming problem, while the second problem is formulated as a biconvex optimization problem.

\subsection{Spectrum Sensors Scheduling} \label{sec:spectrum_sensing}
In this subsection, we investigate the SSS problem which is posed as a nonlinear integer programming problem. Through a Cross-Entropy-based solution, the channel availability is maximized while guaranteeing EH-enabled spectrum sensors sustainability and PUs protection.
\subsubsection{Problem Formulation}
Three factors impact the average detected available time of the channel: the actual average available time, the final false alarm probability complement ($1 - F_f^k$) and the final detection probability complement ($1 - F_d^k$). The actual average available time of the $k$-th channel is the product of the mean sojourn time and the stationary probability of the $k$-th channel. Let $\bar{L}_{\mbox{\textit{\footnotesize Active}}}^k= \frac{1}{\lambda_k}$ and $\bar{L}_{\mbox{\textit{\footnotesize Inactive}}}^k = \frac{1}{\mu_k}$ denote the mean sojourn time of the \textit{Active} state and the \textit{Inactive} state on the $k$-th channel, respectively. Moreover, the stationary probabilities of the \textit{Active} and \textit{Inactive} states are given by
\begin{equation}\label{eqn_Poo}
P_{Active}^k = \frac{\mu_k}{\lambda_k + \mu_k}, \quad P_{\mbox{\footnotesize \textit{Inactive}}}^k = \frac{\lambda_k}{\lambda_k + \mu_k}.
\end{equation}
Therefore, the $k$-th channel average actual available time is given by,
\begin{equation}
\alpha^k = \bar{L}_{\mbox{\textit{\footnotesize Inactive}}}^k \cdot P_{\mbox{\textit{\footnotesize Inactive}}}^k.
\end{equation}
Let $\bf{J}$ be an $M \times K$ matrix with binary elements $[{\bf J}]_{m,k}$. A binary element of $1$ indicates the assignment of the $m$-th spectrum sensor to detect the $k$-th channel and $0$ otherwise. Given that the PU on the $k$-th channel is inactive, the probability that the $k$-th channel is available is equivalent to the complement of the final false alarm probability, which can be written as
\begin{equation}
1-F_f^k = \Pi_{m \in \mathcal{M}_k} (1 - \bar{p}_f) = (1-\bar{p}_f)^{\sum_{m=1}^M [{\bf{J}}]_{m,k}}
\end{equation}
The data sensor transmission interferes with the PU transmission if the spectrum sensors do not detect the PU presence while it is present. The chance of this event is captured by the mis-detection probability $1 - F_d^k$. To protect the PU from such interference, we consider detection decisions with a mis-detection probability of less than $\overline{MD}_{thr}$. A binary variable $I^k_{d}$ is introduced to indicate whether the protection requirement is satisfied or not and is given by,
\begin{equation}
\label{eqn_id}
I_d^k= \left\{
\begin{array}{rl}
1, & \text{if }  1 - F_d^k < \overline{MD}_{thr},\\
0, & \text{otherwise}.
\end{array} \right.
\end{equation}
If the mis-detection probability of the $k$-th channel exceeds  $\overline{MD}_{thr}$, the detection is considered to be unreliable, and the $k$-th channel is not accessed by data sensors. Substituting Eqn. (\ref{eqn_pd}), (\ref{eqn_fd}) into (\ref{eqn_id}) yields,
\begin{equation} \label{eqn_ind2}
I_d^k= \left\{
\begin{array}{rl}
1, & \text{if }   \Pi_{m \in \mathcal{M}_k} \left( 1- Q\left(\frac{Q^{-1}(\bar{p}_f) - \sqrt{U}\gamma_{m,k}}{\sqrt{2 \gamma_{m,k} + 1}} \right) \right)  \\
 &< \overline{MD}_{thr},\\ %
0,& \text{otherwise}.
\end{array} \right.
\end{equation}

The objective function of SSS that maximizes the average detected available time of a channel while protecting the PU can be written as follows:
\begin{equation}
\label{SSS-objective}
\sum_{k=1}^K \alpha^k (1-\bar{p}_f)^{\sum_{m=1}^M [{\bf J}]_{m,k}} I_d^k.
\end{equation}
The SSS is subject to two constraints; the first constraint is related to the EH dynamics, whereas the second constraint is related to the frame structure (see Fig. \ref{fig_tml}). In a given frame $T$, to maintain the sustainability of the spectrum sensors, the energy consumption of each sensor should not exceed its harvested energy. This arrangement can be mathematically written as
\begin{equation} \label{CE_power}
(\sum_{k=1}^{K} [{\bf{J}}]_{m,k}) e_s \leq \pi_m T.
\end{equation}
Moreover, the time that is used for sensing the $k$-th channel is bounded by the duration of the spectrum-sensing phase $\tau_s$ in one period, namely,
\begin{equation} \label{CE_time}
(\sum_{m=1}^{M} [{\bf J}]_{m,k}) \tau_{s^{'}} \leq \tau_s.
\end{equation}
Then, the spectrum sensor scheduling problem becomes a combinatorial problem of optimizing the sensor-to-channel assignment matrix $\bf{J}$ and can be written as follows:
\begin{align*}
\label{SSS-Problem}
&(\mbox{SSS})~ \max_{\bf{J}} \sum_{k=1}^K \alpha^k (1-\bar{p}_f)^{\sum_{m=1}^M [{\bf J}]_{m,k}} I_d^k    \\ \nonumber 
{\rm s.t.}~ &\begin{cases}
(\sum_{k=1}^{K} [{\bf J}]_{m,k}) e_s \leq \pi_m T,  \forall m,\\
(\sum_{m=1}^{M} [{\bf J}]_{m,k}) \tau_{s^{'}} \leq \tau_s, \forall m,\\
  [{\bf J}]_{m, k}= \{0 , 1\} ~~\forall m, k.
\end{cases}
\end{align*}
The term $\alpha^k$ has a constant value over a given channel. As more channels are assigned to a given spectrum sensor, i.e., as ${\sum_{m=1}^M [{\bf J}]_{m,k}}$ increases, the value of $(1-\bar{p}_f)^{\sum_{m=1}^M [{\bf J}]_{m,k}}$ decreases, and $I_d^k$ tends to take a unit value. Therefore, there exists a trade-off between $(1-\bar{p}_f)^{\sum_{m=1}^M [{\bf J}]_{m,k}}$ and $I_d^k$. However, the assignment $[{\bf J}]_{m,k}$ exists in the exponential part of $(1-\bar{p}_f)^{\sum_{m=1}^M {\bf J}_{m,k}}$ and affects $I_d^k$ through $\mathcal{M}_k$. These structures make the SSS an integer programming problem. Intuitively, the objective function in Eqn. (\ref{SSS-objective}) can be optimized by performing an exhaustive search over the space that is characterized by the constraints of SSS. However, this arrangement leads to a search space of size $2^{MK}$ which is computationally prohibitive especially for the resource-limited sensor network. In the following subsection, we apply the Cross-Entropy-based algorithm (C-E algorithm) \cite{Rubinstein1999} to address (SSS). Although the performance bound of the C-E algorithm remains an open theoretical issue \cite{DeBoer2005}, it has been shown effective in solving a similar combinatorial optimization problem \cite{Zhang2014}.

\subsubsection{Cross Entropy-based Algorithm}
The basic idea of the C-E algorithm lies in the transformation of a deterministic problem into the related stochastic optimization problem such that rare-event simulation techniques can be applied. More specifically, an associated stochastic problem (ASP) is defined for the deterministic problem, and then, the ASP problem is solved using an adaptive scheme. The adaptive scheme generates random solutions that converges stochastically to the optimal or near-optimal solution of the original deterministic problem.

Before introducing the C-E algorithm, we transform the constrained problem into an unconstrained problem by applying a penalty method. Let $\omega = -\sum_{k=1}^K \alpha^k$ be the penalty for violating any of the constraints, and then, the SSS problem transforms to
\begin{equation}
\label{eqn_objce}
\begin{aligned}
O = & ~\omega \cdot  I_{(\sum_{m=1}^M [{\bf J}]_{m, k} \cdot e_s> \pi_m T)} + \omega \cdot I_{(\sum_{k=1}^K [{\bf J}]_{m, k} \cdot \tau_{s^{'}} > \tau_s)}  \\
& +\sum_{k=1}^K \alpha^k (1-\bar{p}_f)^{\sum_{m=1}^M [{\bf J}]_{m,k}} I^k_d.
\end{aligned}
\end{equation}
For a positive constant penalty of $\omega$, the unconstrained objective function evaluates to a negative value for all of the infeasible solutions that violate constraints (\ref{CE_power}) and (\ref{CE_time}). The indicator function, $I_{(\cdot)}$, takes the value of $1$ for true evaluations of $(\cdot)$ and zero otherwise.

Recall that the sink schedules the spectrum sensors to detect the presence of PU on certain licensed channels. Therefore, the row vectors of ${\bf J}$ are drawn from a set, $\mathcal{C}$, of channel assignment vectors that hold a sequence of binary numbers, $\mathcal{C}=\{\bm{1}, \cdots, \bm{c}, \cdots, \bm{C}\}$, and the cardinality of the set is $C= |\mathcal{C}| = 2^K$. Mathematically, $[{\bf J}]_{m,1:K} \in \mathcal{C}$. Although $C$ grows exponentially with $K$, we focus on a single hop network in which the number of potential channels is limited, e.g., $4 - 6$; hence, the cardinality $C$ is also limited. Next, we allocate channel assignment vectors to the spectrum sensors rather than individual channels as in ${\bf J}$. Define a channel assignment vector to the spectrum sensors binary assignment matrix, ${\bf V}^z=\{v^z_{m,\bm{c}}~|~\ 1 \leq m \leq M, \bm{c} \in \mathcal{C}\}$, of size $M \times C$, where a value of $1$ for $v^z_{m,\bm{c}}$ indicates that the  channel assignment vector $\bm{c}$ is allocated to the $m$-th spectrum sensor. In one of the steps of the C-E algorithm, random samples of this matrix are generated, and the superscript $z$ is introduced to denote the sample number.

The ${\bf V}^z$ samples are generated following a probability mass function (p.m.f) that is denoted by matrix ${\bf Q}^i$, which is defined as ${\bf Q}^i := \{q_{m,\bm{c}}^i~|~\ 1 \leq m \leq M, \bm{c} \in \mathcal{C}\}$, where $q_{m,\bm{c}}^i$ denotes the probability that $m$ is scheduled to sense the channels in vector $\bm{c}$. The C-E algorithm operates iteratively, and in every step, the p.m.f matrix is updated. The superscript $(\cdot)^i$ denotes the iteration number. Each iteration of the C-E algorithm consists of the following steps:
\begin{enumerate}[]
\item \textit{Initialization:} Set the iteration counter to $i = 1$ and the maximum iteration number to $i_{\max}$. Set the initial stochastic policy of all of the spectrum sensors to be a uniform distribution on the channel assignment vector set $\mathcal{C}$, such that $m$ chooses vector $\bm{c}$ with probability $q_{m,\bm{c}}^1 = 1/C, ~\forall m, \bm{c}$.

\item \textit{Generation of Sample Solutions}: Generate $Z$ samples of the matrix ${\bf V}^z$ based on the p.m.f matrix ${\bf Q}^i$. Note that each spectrum sensor is randomly assigned one channel assignment vector that holds several channels, i.e., $\sum_{1}^{C} v^z_{m,\bm{c}} =1, ~ \forall z ~\forall m$. \label{ce_gs}

\item \textit{Performance Evaluation}: Substitute the $Z$ samples of ${\bf V}^z$ into Eqn. (\ref{eqn_objce}) to obtain an objective function value $O^z$ for each sample; the superscript $(\cdot)^z$ has been introduced to denote the sample number. Sort the $Z$ values of $O^z$ in descending order. Set $\rho$ to be a fraction of the sorted objective values to retain, and then, take the largest $\lceil \rho Z \rceil$ values of the sorted set and ignore all of the others. Moreover, set $\eta$ to be the smallest value in the sorted and retained set.

\item \textit{p.m.f. Update}: Update the p.m.f. based on the retained objective function values. The value of $q_{m,\bm{c}}^{i+1}$ is determined by
    \begin{equation}
    \label{C-D_Update}
    q_{m,\bm{c}}^{i+1} = \frac{\sum_{z=1}^Z v^z_{m,\bm{c}} I_{O^z \geq \eta}}{\lceil \rho Z \rceil},
    \end{equation}
In this step, the channel vector assignment probability $q_{m,\bm{c}}^i$ is updated by increasing the probability of assignments that are generating large objective function values over the various randomly generated samples.

\item \textit{Stopping Criterion}: The algorithm stops iterating if the maximum number of iterations $i_{\max}$ is reached or the following inequality stands
\begin{equation}
    \label{eqn_threshold}
||{\bf Q}^{i+1} - {\bf Q}^i ||_{Fr} \leq \epsilon,
\end{equation}
where $||\cdot ||_{Fr}$ denotes the Frobenius norm\footnote{The Frobenius norm is defined as the square root of the sum of the absolute squares of the elements of the matrix. For example, if
\begin{equation*}
\bm{A} =
\begin{bmatrix}
a_{11} & a_{12} \\
a_{21} & a_{22}
\end{bmatrix},
\end{equation*}
then
\begin{equation*}
|| \bm{A} ||_{Fr} = \sqrt{|a_{11}|^2 + |a_{12}|^2 + |a_{21}|^2 + |a_{22}|^2}.
\end{equation*}
}. Otherwise, increment the iteration counter $i$ and go back to Step \ref{ce_gs}. Eqn. (\ref{eqn_threshold}) represents the convergence condition of p.m.f ${\bf Q}^i$. It was shown in \cite{Costa2007} that the sequence of p.m.f converges with probability 1 to a unit mass that is located at one of the samples.

Note that fine tuning the values $\epsilon$ and $i_{\max}$ impacts the convergence speed of the algorithm and the quality of the obtained solution. A large value of $\epsilon$ results in faster convergence but a shorter average available time of the channel. Additionally, a larger value for $i_{\max}$ leads to a slower convergence speed but also leads to a longer average available time for the channel.
\item \textit{Solution Selection}: When the algorithm terminates, select the solution ${\bf V}^z$ that generates the largest objective value $O^z$. Set the values of ${\bf J}$ based on the assignments solution in ${\bf V}^z$. In other words, the channel-vector-to-the-spectrum-sensors assignment in ${\bf V}^z$ is mapped to the channels-to-spectrum-sensors assignment in ${\bf J}$ which is a solution to the original problem SSS.
    \label{ce_output}
\end{enumerate}

The sink schedules spectrum sensors to detect the licensed channels according to the solution obtained in Step-\ref{ce_output}. After the spectrum-sensing phase, spectrum sensors report their decisions on the channel availability to the sink. The sink estimates the availability of each channel based on the Logic-OR rule and utilizes the available channels to collect data from the data sensors. In the following, we investigate the data sensor resource allocation (DSRA) problem.

\subsection{Transmission Time and Power Allocation in the Data Transmission Phase}
\label{sec:data_transmission}
In the data transmission phase, data sensors report the collected data to the sink. Because a data sensor is battery-powered, minimizing its energy consumption becomes critical to prolong its lifetime. To accomplish this goal, we first formulate the problem of the data sensors' transmission time and power allocation as a biconvex optimization problem, and then, we propose a joint time and power allocation (JTPA) algorithm to obtain a solution.
\subsubsection{Problem Formulation}
Available channels detected by the spectrum sensors are allocated to the $B$ cognitive radio transceivers that are mounted on the sink. If the number of available channels is less than $B$, then all of the available channels are allocated. Alternatively, the available channels are sorted with respect to their sojourn time, and the channels with the largest sojourn time values are allocated to transceivers. Let $\bar{K}$ be the number of allocated channels, and note that $\bar{K} \leq B$. Because all of the channels have the same bandwidth and average power gain, a long average sojourn time implies a large capacity.

Recall that $\alpha^k$ is the $k$-th channel's available time. However, scheduling the data sensors to transmit for the entire $\alpha^k$ increases the chance of collision between the data sensor and the returning PU. Let $\bar{\alpha}^{k}$ be the maximum access time of the $k$-th channel, where $\bar{\alpha}^{k} < \alpha^k$. It is important to design $\bar{\alpha}^{k}$ such that a low collision probability $p_{coll}^k(\bar{\alpha}^{k})$ is maintained on the $k$-th channel. Given that the PU behavior on each channel is a stationary exponential ON-OFF random process, the probability of collision $p_{coll}^k(\bar{\alpha}^{k})$ is given by \cite{Kim2008}

\begin{equation} \label{eqn_col_prob}
p_{coll}^k(\bar{\alpha}^{k}) = P_{\mbox{\footnotesize \textit{Inactive}}}^k \cdot (1-e^{-\mu_k \bar{\alpha}^{k}}).
\end{equation}
where $P_{\mbox{\footnotesize \textit{Inactive}}}^k$ is the probability that PU is not present on the $k$-th channel at the beginning of the data transmission phase, and $(1-e^{-\mu_k \bar{\alpha}^{k}})$ captures the probability that PU returns in $[0, \bar{\alpha}^{k}]$. The detailed derivation of Eqn. (\ref{eqn_col_prob}) is provided in Appendix \ref{Apn_col_prob}.


To maintain a target collision probability $\overline{p_{coll}^k}$, the channel access time should not exceed,
\begin{equation}
\bar{\alpha}^{k} \leq \frac{-\ln (1- \overline{p_{coll}^k}/P_{\mbox{\footnotesize \textit{Inactive}}}^k)}{\mu_k}.
\end{equation}
Furthermore, $\bar{\alpha}^{k} $ is bounded by the duration of the data transmission phase $T - \tau_s$. Thus,
\begin{equation}
\bar{\alpha}^{k} = \min \left(\frac{-\ln(1- \overline{p_{coll}^k}/P_{\mbox{\footnotesize \textit{Inactive}}}^k)}{\mu_k}, T - \tau_s \right).
\end{equation}

Let $\bf{T}$ and $\bf{P}$ with elements $t_{n,k}$ and $p_{n,k}$ denote the transmission time and power allocation matrices of size $N \times \bar{K}$. Let $t_{n,k}$ and $p_{n,k}$ denote the transmission time and power of the $n$-th data sensor over the $k$-th channel, respectively. The total energy consumption of the data sensors is determined by
\begin{equation}\label{jatp_a}
 \sum_{n=1}^N \sum_{k=1}^{\bar{K}} t_{n,k} p_{n,k}.
\end{equation}
The transmission time of all of the data sensors over the $k$-th channel is limited by the channel access time $\bar{\alpha}^k$,
\begin{equation} \label{jatp_b}
\sum_{n=1}^N t_{n,k} \leq \bar{\alpha}^k, \forall k.
\end{equation}
Furthermore, the transmission time of the $n$-th data sensor is bounded by the duration of the data transmission phase, namely,
\begin{equation}\label{jatp_c}
\sum_{k=1}^{\bar{K}} t_{n, k} \leq T-\tau_s, \forall n.
\end{equation}

The data amount that is required from the $n$-th data sensor is denoted by $D_{n}$. During the data transmission phase, the $n$-th data sensor transmits sensed data over the $k$-th channel to the sink at a transmission power of $p_{n,k}$ and duration of $t_{n, k}$. The data transmission rate is given by
\begin{equation}
\label{eqn_shannon}
R_{n,k} = W \log_2 \left(1+ \delta_{n,k} p_{n,k} \right),
\end{equation}
where $\delta_{n,k}$ represents the $n$-th sensor channel gain over the $k$-th channel at the sink. The allocated rate should be sufficiently large to support the generated data. This relationship is captured by
\begin{equation}\label{jatp_d}
\sum_{k=1}^{\bar{K}} t_{n, k} R_{n,k} \geq D_n.
\end{equation}
The transmission time $t_{n,k}$ and power $p_{n,k}$ are nonnegative. Additionally, $p_{n,k}$ is constrained by the maximum transmission power $p_{\max}$. Thus, we have
\begin{align}
&t_{n,k}  \geq 0, \forall k, ~\forall n ~\mbox{and} \label{jatp_e} \\
&0  \leq p_{n,k} \leq p_{\max} \label{jatp_f}.
\end{align}

We allocate the transmission time $\bf{T}$ and power $\bf{P}$ to minimize the energy consumption of all of the data senors, which can be formulated as:
\begin{align*}
&(\mbox{DSRA})~ \min_{ \bf{T, P}} \sum_{n=1}^N \sum_{k=1}^{\bar{K}} t_{n,k} p_{n,k} \\ \nonumber
{\rm s.t.}~ &\begin{cases}
  \sum_{n=1}^N t_{n,k} \leq \bar{\alpha}^k, \forall k,\\
  \sum_{k=1}^{\bar{K}} t_{n, k} \leq T-\tau_s, \forall n,\\
  \sum_{k=1}^{\bar{K}} t_{n, k} W \log_2(1 + \delta_{n,k} p_{n,k}) \geq D_n, \forall n, \\
  t_{n,k} \geq 0, \forall k, n, \\
  0 \leq p_{n,k} \leq p_{\max}, \forall k, n.
\end{cases}
\end{align*}

The amount of data to transmit is determined by the product of the transmission time $t_{n,k}$ and logarithm of the power $p_{n,k}$. These structures lead to the non-convexity of the problem DSRA with potentially multiple local optima and generally implies difficulty in determining the global optimal solution \cite{Floudas1990}. However, by showing that DSRA is biconvex, we gain access to algorithms that efficiently solve biconvex problems \cite{Gorski2007}; see Appendix \ref{appendix-biconvex}.
\subsubsection{Joint Time and Power Allocation (JTPA) Algorithm}
Because DSRA is biconvex, the variable space is divided into two disjoint subspaces.  Therefore, the problem is divided into two convex subproblems that can be solved efficiently: time allocation (DSRA-1) and power allocation (DSRA-2). The time allocation problem is given by
\begin{subequations} \notag
\label{opt_givenP}
\begin{align}
 \mbox{(DSRA-1)}~&\min_{\bf{T}}
   \begin{aligned}[t]
        \sum_{n=1}^N \sum_{k=1}^{\bar{K}} t_{n,k} p_{n,k}
   \end{aligned}  \\
   \text{s.t.~~} &
     (\ref{jatp_b}) (\ref{jatp_c}) (\ref{jatp_d}) (\ref{jatp_e}), \notag
\end{align}
\end{subequations}
while the power allocation problem is given by,
\begin{subequations} \notag
\label{opt_givenT}
\begin{align}
\mbox{(DSRA-2)}~ &\min_{\bf{P}}
   \begin{aligned}[t]
        \sum_{n=1}^N \sum_{k=1}^{\bar{K}} t_{n,k} p_{n,k}
   \end{aligned}  \\
   \text{s.t.~~} &
     (\ref{jatp_d}) (\ref{jatp_f}), \notag
\end{align}
\end{subequations}
In the following, we adopt the Alternate Convex Search in \cite{Gorski2007} to solve the DSRA problem. In every step of the proposed algorithm, one of the variables is fixed, and the other is optimized, and vice versa in the subsequent step. The proposed algorithm solves the two problems iteratively and converges to a partially optimal solution\footnote{The definition of a partial optimal solution is given as follows:
\begin{defn}
\label{def_paropt}
Let $f : S \to \mathbb{R}$ be a given function and let $(x^*, y^*) \in S$. Therefore, $(x^*, y^*)$ is called a partial optimum of $f$ on $S$, if
\begin{eqnarray*}
 f(x^*, y^*) & \leq & f(x, y^*) \quad \forall x \in S_{y^*} \nonumber ,  \\
 f(x^*, y^*) & \leq & f(x^*, y) \quad \forall y \in S_{x^*}.
\end{eqnarray*}
\end{defn} $S_{y^*}$ and $S_{x^*}$ denote the $y^*$- and $x^*$-sections of $S$ \cite{Gorski2007}.}. The detailed procedure of the proposed algorithm is given as follows:
\begin{algorithm}[h]
\caption{Proposed Algorithm JTPA}
\label{alg_acs}
\begin{algorithmic}[1]
\REQUIRE Network parameters, stopping criterion $\epsilon$ and maximum number of  iterations $i_{\max}$.
\ENSURE The optimal (${\bf T}^*, {\bf P}^*$).
\STATE Choose an arbitrary starting point (${\bf T}^0, {\bf P}^0$) and set the iteration index as $i = 0$, and the initial solution as $z^0 = 0$;
\REPEAT \STATE {Fix ${\bf P}^i$ and determine the optimal ${\bf T}^{i+1}$ by solving DSRA-1 via the Simplex method \cite{Dantzig2003}; \label{al2_s1}}
\STATE {Fix ${\bf T}^{i+1}$, determine the optimal ${\bf P}^{i+1}$ and objective function value $z_{i}$ by solving DSRA-2 via the Interior Point method \cite{Boyd2004}; \label{al2_s2}}
\STATE {$i = i + 1$};
\UNTIL{$z^{i+1} - z^{i-1} < \epsilon$ or $i \geq i_{\max}$}
\RETURN $({\bf T}^{i+1}, {\bf P}^{i+1})$
\end{algorithmic}
\end{algorithm}

The convergence of the proposed algorithm to the global optimum is not guaranteed since DSRA is biconvex and could have several local optima. However, because the objective function is differentiable and biconvex over a biconvex set, convergence to a stationary point that is partially optimal is guaranteed \cite{Gorski2007}. Data sensors transmit their data to the sink using the transmission time and power that is determined by the proposed JTPA algorithm.
\section{Performance Evaluation} \label{sec:evaluation}
We evaluate the performance of the C-E algorithm in the spectrum sensing phase and the JTPA algorithm in the data transmission phase through performing simulations. The simulation results are obtained through Matlab on a computer with intel core(TM) i7-4510u CPU@2.00GHz 2.6GHz, 8 GB RAM.
\subsection{Simulation Setup} \label{sec:par_setting}
We simulate an HCRSN that consists of $M = 10$ spectrum sensors and $N = 30$ data sensors. The sensors are randomly placed in a circular area with a radius of $20$ meters. The sink is located at the center of this circular area. The HCRSN coexists with a primary network that is deployed over an area that has a radius of $200$ meters. The PUs' transmission power is 1 mW, and the noise power is $-80$ dB. The PU's channel gain at the sensor is simulated based on $1/d^{3.5}$, where $d$ is the distance between the PU and spectrum sensor. The target false alarm probability for all of the spectrum sensors $\bar{p}_f$ is set to $0.1$. PUs transmit QPSK modulated signals, with each over a $6$ MHz bandwidth $W$. The default number of licensed channels is seven unless specified otherwise. Over the seven channels, seven PUs operate over one channel exclusively. Their transition rates $\lambda_k, k=1, \cdots, 7$, are $0.6, 0.8, 1, 1.2, 1.4, 1.6, 1.8$, respectively. Additionally, the transition rates $\mu_k, k=1, \cdots, 7$, are $0.4, 0.8, 0.6, 1.6, 1.2, 1.4, 1.8$, respectively. The network operates periodically over slots of length $T=100$ ms \cite{Pei2011} (see Fig. \ref{fig_tml}). The maximum transmission power is set to $p_{max} = 100$ mW \cite{Shu2006}. The remaining parameters are set according to Table \ref{table.parametersettings} unless specified otherwise. In the following two subsections, we evaluate the performances of the proposed algorithms.

\begin{table}[!t]
    \caption{Parameter Settings}
    \centering
    \small
    \begin{tabular}{p{6.5cm}|p{1.3cm}}
         \hline
         \hline 
         \textbf{Parameter} & \textbf{Settings} \\
         \hline
         \hline
         Upper bound of the transmission power $p_{max}$ & 100 mW\\
         Bandwidth of the licensed channel $W$ \cite{Zhang2014} & 6 MHz\\
         Sampling rate of the EH spectrum sensor $U$ \cite{Liang2008}  & 6000\\
         False alarm probability $\bar{p}_f$ & 0.1  \\
         Energy consumption per spectrum sensing  & $0.11$ mJ\\ 
         Time consumption per spectrum sensing $\tau_{s^{'}}$   & 1 ms   \\
         Duration of the spectrum-sensing phase $\tau_s$   & 5 ms   \\
         Duration of each slot $T$ \cite{Pei2011} & 100 ms \\
         Upper bound of the collision probability $p_{coll}^k(\bar{\alpha}^{k})$ & 0.1 \\
         Mis-detection probability threshold $\overline{MD}_{thr}$ & 0.9 \\
         Fraction of samples retained in C-E $\rho$ & 0.6 \\
         Stopping threshold of C-E $\epsilon$ & $10^{-3}$ \\
         \hline
    \end{tabular}
    \label{table.parametersettings}
\end{table}

\subsection{Performance Evaluation of the C-E Algorithm}
The following simulation results provide insights into the performance of the C-E algorithm over the spectrum sensing phase. Metrics of interest include the convergence speed and quality of the obtained solution. Furthermore, we study the impact of the stopping criterion parameters $\rho$ and $\epsilon$ on those metrics. The performance of the proposed algorithm is also compared to the performance of a candidate greedy algorithm.

In Fig. \ref{SSS_O_R}, we show the optimality of the C-E algorithm in a scenario that has 3 spectrum sensors and 2-4 licensed channels. We reduce the number of spectrum sensors in such a way that an exhaustive search can be efficiently performed. The EH rate and sensing time are set to be sufficiently large that any assignment would be feasible. The C-E algorithm's optimal solution, i.e., the Detected Average Available Time of Channels (DAATC), is compared to that obtained by random assignment and exhaustive search. The random assignment randomly assigns licensed channels to the spectrum sensors, while the exhaustive search traverses all of the possible assignments. As shown in Fig. \ref{SSS_O_R}, the expected detected channel's available time obtained by the C-E algorithm is close to that of the exhaustive search and is able to achieve $87\% - 94\%$ of it. The proposed algorithm's computed solution is 2 to 3 times larger than that of the random assignment.


\begin{figure}[h]
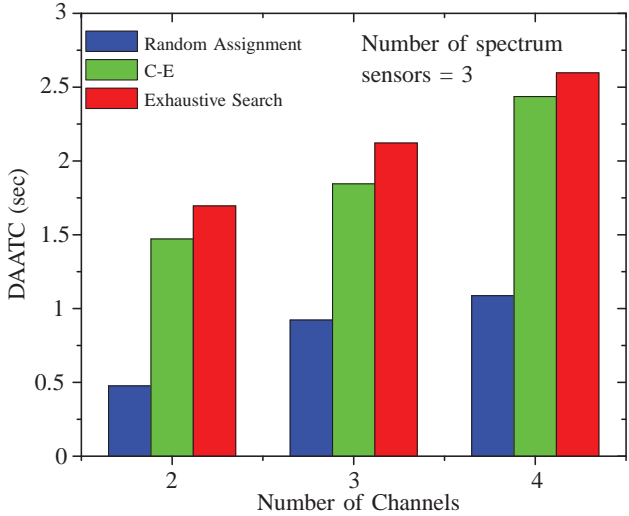
 
\centering
\begin{lpic}[l(8mm),r(5mm),t(0mm),b(5mm)]{Figure_4(0.38,0.38)}
\small
\lbl[l]{26,148; \scriptsize Random Assignment}
\lbl[l]{26,139; \scriptsize C-E}
\lbl[l]{26,129;\scriptsize  Exhaustive Search}
\lbl[l]{102,148;\small Number of spectrum}
\lbl[l]{102,138;\small sensors = 3}
\lbl[.]{101,-12; Number of Channels}
\lbl[.]{-18,82,90; DAATC (sec)}

\lbl[b]{37,-7; 2 }
\lbl[b]{101,-7;3 }
\lbl[b]{165,-7;4 }

\lbl[r]{2,4; 0 }
\lbl[r]{1,30; 0.5 }
\lbl[r]{1,56; 1 }
\lbl[r]{0,82; 1.5}
\lbl[r]{0,108; 2}
\lbl[r]{0,134; 2.5}
\lbl[r]{-1,160; 3}
\end{lpic}
\caption{The comparison of C-E algorithm's performance and the performance of random assignment and exhaustive search in terms of the DAATC.}
\label{SSS_O_R}
\end{figure}

For a network of 10 spectrum sensors with 7 channels, the stability of the C-E algorithm is shown in Figs. \ref{Fig:CE_energy_cvg} and  \ref{Fig:CE_dur_cvg}. Fig. \ref{Fig:CE_energy_cvg} shows that the convergence of the C-E algorithm with respect to the EH rate ranges from $3$ mW to $7$ mW\footnote{In \cite{Zhang2013}, the real experimental data obtained from the Baseline Measurement System (BMS) of the Solar Radiation Research Laboratory (SRRL) shows that, the EH rate ranges from 0 mW to 100 mW for most of the day.}. It can be seen that the value of the objective function fluctuates during the startup phase and then converges to the maximum DAATC after 30 iterations. Moreover, the value of the objective function increases by one-third for the case in which EH rate = $7$ mW, while it doubles for the case in which the EH rate = $3$ mW. This finding demonstrates the responsiveness of the stochastic policy updating strategy defined by Eqn. (\ref{C-D_Update}). Moreover, it can be clearly seen that the DAATC increases with the EH rate.

Fig. \ref{Fig:CE_dur_cvg} shows the convergence results for the C-E algorithm with respect to the spectrum-sensing duration $\tau_s$ range of 2 ms to 6 ms and EH rate of 7 mW. As we can see from the figure, the value of DAATC fluctuates at the startup phase. This is because the samples of channel assignment vectors are generated according to the uniform distribution at the initialization step of the C-E algorithm. As the C-E algorithm executes, the probability to generate samples that bring higher DAATC increases. At last, the algorithm converges to a stable solution that leads to highest DAATC in 30 iterations. Furthermore, the DAATC increases with the length of the spectrum sensing phase $\tau_s$, because more channels can be detected by the spectrum sensors with larger $\tau_s$.

\begin{figure}[!tbp]
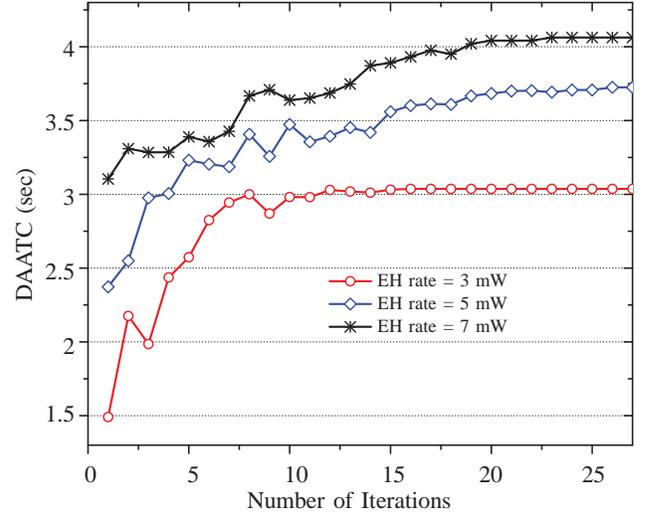

  \centering
    \begin{lpic}[l(9mm),r(5mm),t(5mm),b(5mm)]{Figure_5(0.38,0.38)}
    \small
    \lbl[l]{105,62;\scriptsize EH rate = 3 mW}
    \lbl[l]{105,54;\scriptsize EH rate = 5 mW}
    \lbl[l]{105,46;\scriptsize EH rate = 7 mW}
    \lbl[.]{95,-15;   Number of Iterations}
    \lbl[.]{-18,78,90;  DAATC (sec)}

    \lbl[b]{6,-9;  0 }
    \lbl[b]{42,-9;  5 }
    \lbl[b]{78,-9;  10 }
    \lbl[b]{112,-9;  15 }
    \lbl[b]{146,-9;  20 }
    \lbl[b]{182,-9;  25 }

    \lbl[r]{2,15;  1.5 }
    \lbl[r]{2,39;  2 }
    \lbl[r]{2,65;  2.5 }
    \lbl[r]{2,91;  3 }
    \lbl[r]{2,117;  3.5 }
    \lbl[r]{2,143;  4 }
    \end{lpic}
    \caption{Convergence of the C-E algorithm for three different EH rates, $\pi_m$}
\label{Fig:CE_energy_cvg}
  \end{figure}

  \begin{figure}[h]
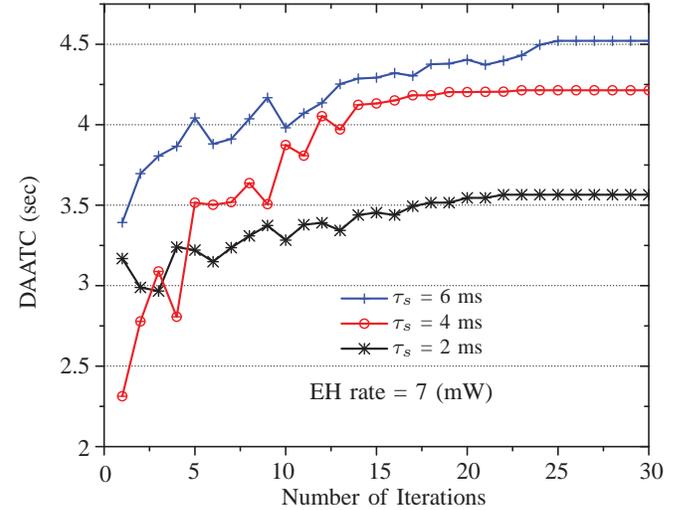

    \begin{lpic}[l(11mm),r(5mm),t(5mm),b(5mm)]{Figure_6(0.38,0.38)}
    \small
    \lbl[l]{105,56; \footnotesize  $\tau_s$ = 6 ms}
    \lbl[l]{105,47; \footnotesize  $\tau_s$ = 4 ms}
    \lbl[l]{105,38; \footnotesize  $\tau_s$ = 2 ms}
    \lbl[.]{102,-14;  Number of Iterations}
    \lbl[l]{76,22;   EH rate = 7 (mW)}
    \lbl[.]{-22,75,90; \small DAATC (sec)}

    \lbl[b]{6,-9;  0 }
    \lbl[b]{37,-8;  5 }
    \lbl[b]{68,-8;  10 }
    \lbl[b]{102,-8;  15 }
    \lbl[b]{133,-8;  20 }
    \lbl[b]{164,-8;  25 }
    \lbl[b]{197,-8;  30 }

    \lbl[r]{2, 3;  2 }
    \lbl[r]{2,32;  2.5 }
    \lbl[r]{2,61;  3 }
    \lbl[r]{2,88;  3.5 }
    \lbl[r]{2,117;  4 }
    \lbl[r]{2,145;  4.5 }
    \end{lpic}
    \caption{Convergence of the C-E algorithm for three different spectrum-sensing durations, $\tau_s$.}
    \label{Fig:CE_dur_cvg}
\end{figure}

The C-E algorithm stops iterating if the inequality in Eqn. (\ref{eqn_threshold}) holds, or the maximum number of iterations is reached. Figs. \ref{Fig:SSS_eps} and \ref{Fig:SSS_rho} show the impact of fine tuning the algorithm parameters, $\epsilon$ and $\rho$, on the convergence speed and quality of the obtained solution. It can be seen from Fig. \ref{Fig:SSS_eps} that a large number of iterations is required to satisfy the stopping criterion, and a larger DAATC can be obtained for a small $\epsilon$. Furthermore, the algorithm converges in less than 100 iteration even for the small value of $\epsilon = 10^{-6}$. Fig. \ref{Fig:SSS_rho} shows the impact of the fraction of samples that is retained (i.e., $\rho$) in each step on the algorithm performance. The C-E algorithm converges faster with small $\rho$. Moreover, the DAATC peaks at one value of $\rho$ and then starts falling. For the parameters that considered in this study, $\rho$ peaks at $0.6$. The fraction $\rho$ should be optimized to obtain a larger DAATC.

\begin{figure}[h]
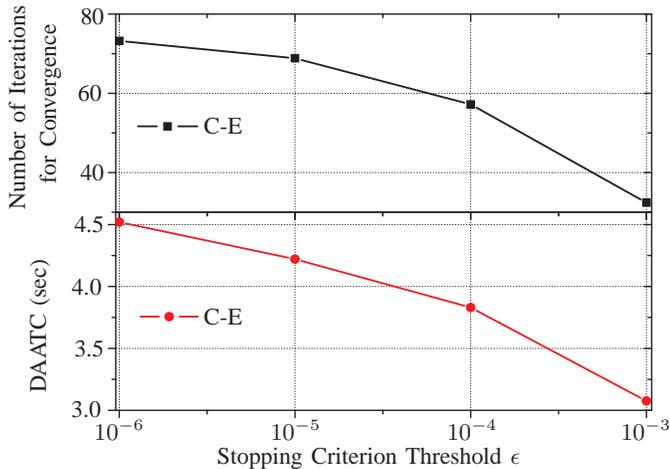
 
  \centering
   \begin{lpic}[l(12mm),r(5mm),t(5mm),b(5mm)]{Figure_7(0.34,0.33)}
   \small
    \lbl[l]{39,119;  C-E}
    \lbl[l]{39,42; C-E}
    \lbl[.]{105,-15;   Stopping Criterion Threshold $\epsilon$ }
    \lbl[.]{-34,128,90;   Number of Iterations }
    \lbl[.]{-21,128,90;    for Convergence}

    \lbl[.]{-25,40,90;   DAATC (sec) } 

    \lbl[b]{6,-8;  $10^{-6}$}
    \lbl[b]{75,-8;  $10^{-5}$}
    \lbl[b]{144,-8;  $10^{-4}$}
    \lbl[b]{212,-8;  $10^{-3}$}

    \lbl[r]{2,4.5;  3.0 }
    \lbl[r]{0,29;  3.5}
    \lbl[r]{-1,54;  4.0}
    \lbl[r]{0,79;  4.5}

    \lbl[r]{0,100;  40}
    \lbl[r]{-1,132;  60}
    \lbl[r]{0,163;  80}


\end{lpic}
\caption{The effect of $\epsilon$ on the performance of the C-E algorithm.}
\label{Fig:SSS_eps}
\end{figure}

  \begin{figure}[h]
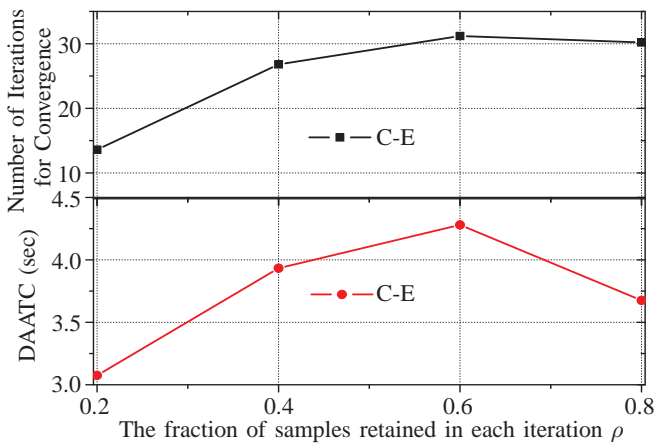

  \centering
  \begin{lpic}[l(9mm),r(5mm),t(5mm),b(5mm)]{Figure_8(0.35,0.33)}
  \small
    \lbl[l]{112,41;  C-E} 
    \lbl[l]{112,104;  C-E}
    \lbl[.]{110,-15;    The fraction of samples retained in each iteration $\rho$}
    \lbl[.]{-25,116,90;   Number of Iterations } 
    \lbl[.]{-14,116,90;    for Convergence} 

    \lbl[.]{-20,40,90;   DAATC (sec)}

    \lbl[b]{5,-8;  0.2}
    \lbl[b]{75,-8;  0.4}
    \lbl[b]{143,-8;  0.6}
    \lbl[b]{212,-8;  0.8}
    \lbl[r]{3,4;  3.0 }
    \lbl[r]{0,29;  3.5}
    \lbl[r]{0,54;  4.0}
    \lbl[r]{0,79;  4.5}

    \lbl[r]{3,89; 10 }
    \lbl[r]{0,115; 20}
    \lbl[r]{0,141; 30}

\end{lpic}
\caption{The impact of the fraction of retained samples $\rho$ on the performance of the proposed C-E algorithm.}
\label{Fig:SSS_rho}
  \end{figure}

  \begin{figure}[!tbp]
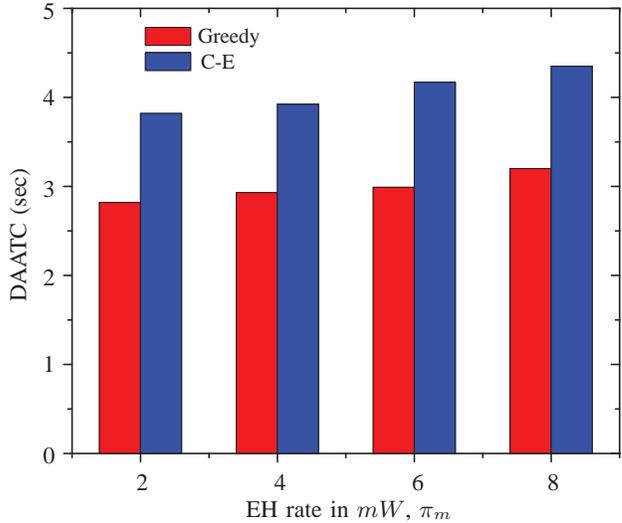

  \centering
  \begin{lpic}[l(9.5mm),r(5mm),t(5mm),b(5mm)]{Figure_9(0.382,0.382)}
    \small
    \lbl[l]{48,148.5; \footnotesize Greedy }
    \lbl[l]{49,140.5;\footnotesize C-E }
    \lbl[.]{102,-16; \small  EH rate in $mW$, $\pi_m$ }
    \lbl[.]{-14,80,90; \small DAATC (sec)}

    \lbl[b]{30,-9;2 }
    \lbl[b]{78,-9;4 }
    \lbl[b]{126,-9;6 }
    \lbl[b]{173,-9;8 }

    \lbl[r]{1,3;0 }
    \lbl[r]{1,35;1 }
    \lbl[r]{1,65;2 }
    \lbl[r]{1,96;3 }
    \lbl[r]{1,128;4 }
    \lbl[r]{1,158;5 }
    \end{lpic}
	\caption{A comparison of the C-E algorithm and the Greedy algorithm performance for a range of EH rates.}
    \label{Fig:CE_gr_EHrate}
\end{figure}

\begin{figure}[h]
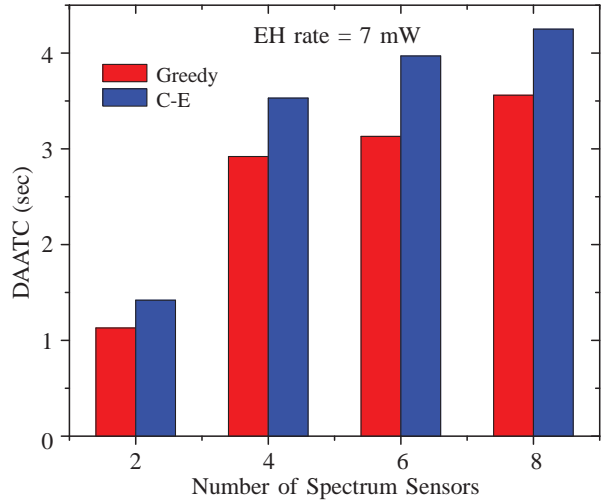
 
\begin{lpic}[l(10mm),r(5mm),t(5mm),b(5mm)]{Figure_10(0.37,0.37)}
    \small
    \lbl[l]{35,133;\footnotesize Greedy }
    \lbl[l]{35,124.7;\footnotesize C-E }
    \lbl[.]{101,-15;\small  Number of Spectrum Sensors }
    \lbl[.]{-13,77,90; \small DAATC (sec)}
    \lbl[l]{70,148;\small  EH rate = 7 mW }
    \lbl[b]{29,-8;2 }
    \lbl[b]{77,-8;4 }
    \lbl[b]{125,-8;6 }
    \lbl[b]{173,-8;8 }

    \lbl[r]{1,2;0 }
    \lbl[r]{1,38;1 }
    \lbl[r]{1,73;2 }
    \lbl[r]{1,107;3 }
    \lbl[r]{1,141;4 }
    \end{lpic}
	\caption{A comparison of the C-E algorithm and the Greedy algorithm performance for a number of spectrum sensors.}
    \label{Fig:CE_gr_numnode}
\end{figure}

Figs. \ref{Fig:CE_gr_EHrate} and \ref{Fig:CE_gr_numnode} show the comparison between the performance of the C-E algorithm and that of the greedy algorithm. The greedy algorithm corresponds to the algorithm proposed in \cite{Yu2011}; it picks the spectrum sensors sequentially and assigns them the channels that bring the largest DAATC. It can be seen from Fig. \ref{Fig:CE_gr_EHrate} that the C-E algorithm outperforms the greedy algorithm in terms of the obtained DAATC over a range of EH rates. A similar result can be seen in Fig. \ref{Fig:CE_gr_numnode}, where the number of spectrum sensors varies for a fixed EH rate of 7 mW.

\subsection{Performance Evaluation of the JTPA Algorithm}

\begin{figure}[h]
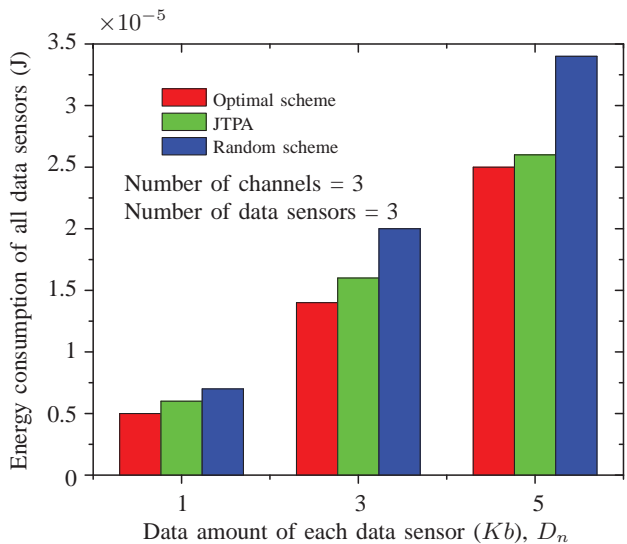

  \centering
    \begin{lpic}[l(9mm),r(5mm),t(5mm),b(5mm)]{Figure_11(0.37,0.37)}
    \small
    \lbl[l]{47,138; \scriptsize Optimal scheme}
    \lbl[l]{47,130; \scriptsize JTPA}
    \lbl[l]{47,122;\scriptsize Random scheme}
    \lbl[l]{15,109;  Number of channels = 3}
    \lbl[l]{15,99;  Number of data sensors = 3}
    \lbl[.]{100,-17;  Data amount of each data sensor ($Kb$), $D_n$ }
    \lbl[.]{-22,81,90;  Energy consumption of all data sensors (J)}
    \lbl[b]{36,-9; 1}
    \lbl[b]{100,-9; 3}
    \lbl[b]{164,-9; 5}
    \lbl[r]{1,3; 0 }
    \lbl[r]{1,26; 0.5 }
    \lbl[r]{-2,49; 1}
    \lbl[r]{-2,71; 1.5}
    \lbl[r]{-2,93; 2}
    \lbl[r]{-2,116; 2.5}
    \lbl[r]{-2,137; 3}
    \lbl[r]{-2,160; 3.5}
    \lbl[b]{19.5,163;$\times 10^{-5}$ }

    \end{lpic}
	\caption{A comparison of JTPA with the random scheme and optimal scheme}
    \label{Fig:JATP_O_R}
\end{figure}

\begin{figure}
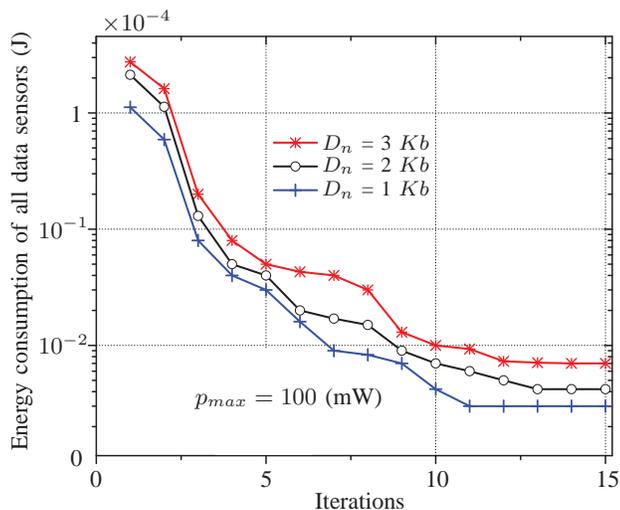

\centering
    \begin{lpic}[l(8mm),r(5mm),t(5mm),b(5mm)]{Figure_12(0.36,0.36)}
    \small
    \lbl[l]{87,119; \footnotesize $D_n$ = 3 $Kb$}
    \lbl[l]{87,111; \footnotesize $D_n$ = 2 $Kb$}
    \lbl[l]{87,102; \footnotesize $D_n$ = 1 $Kb$}
    \lbl[l]{40,25;$p_{max} = 100$ (mW) }
    \lbl[.]{101,-13;Iterations}
    \lbl[.]{-24,83,90;Energy consumption of all data sensors (J)}

    \lbl[b]{5,-8;0 }
    \lbl[b]{68,-8;5 }
    \lbl[b]{131,-8;10 }
    \lbl[b]{194,-8;15 }

    \lbl[r]{1.5,3;0 }
    \lbl[r]{4,45;$10^{-2}$ }
    \lbl[r]{4,88;$10^{-1}$ }
    \lbl[r]{1.5,130;1 }
    \lbl[b]{20,160;$\times 10^{-4}$ }
    \end{lpic}
    	\caption{The convergence of JTPA for a three data amounts, $D_n$ = 1,2,3 $Kb$. }
        \label{Fig:JATP_cvg}
\end{figure}

In this subsection, we evaluate the performance of the JTPA algorithm. For a network of three data sensors with three channels, we first verify the optimality of JTPA by comparing its performance to that of random scheme and optimal scheme in Fig. \ref{Fig:JATP_O_R}. The random scheme randomly assigns channels to the data sensors, while the optimal scheme searches over the complete space. Once the channels are assigned, a Matlab optimization toolbox is used to allocate the time and power. As shown in Fig. \ref{Fig:JATP_O_R}, JTPA consumes $5\%$ to $14\%$ more energy than the optimal scheme. However, JTPA conserves $18\%$ to $31\%$ more energy than the random assignment scheme.

The convergence of JTPA is evaluated in a network of ten spectrum sensors and thirty data sensors with five channels. The transition rates $\lambda_{1:5}$ and $\mu_{1:5}$ are set to $0.6, 0.8, 1, 1.2, 1.4$ and $0.4, 0.8, 0.6, 1.6, 1.2$, respectively. The spectrum sensing duration $\tau_s$ is set to $5$ ms. Fig. \ref{Fig:JATP_cvg} shows the convergence performance of JTPA with respect to the data amount ($D_n$) that is transmitted from each data sensor to the sink. It can be observed that the JTPA algorithm converges after 10 iterations and the energy consumption decreases $97\%$ during the first 6 iterations which implies the efficiency of the proposed algorithm.

\begin{figure}[h]
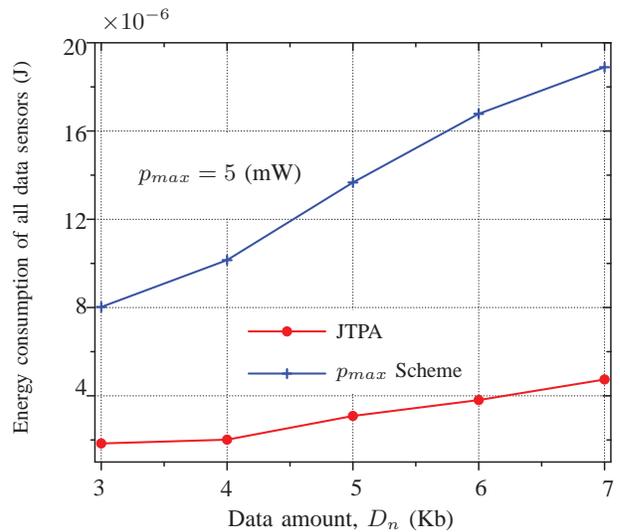

  \centering
   \begin{lpic}[l(8mm),r(5mm),t(5mm),b(5mm)]{Figure_13(0.36,0.36)}
   \small
    \lbl[l]{93,52; \footnotesize JTPA}
    \lbl[l]{93,37; \footnotesize $p_{max}$ Scheme}
    \lbl[l]{20,110;  $p_{max} = 5$ (mW) }
    \lbl[l]{53,-18;  Data amount, $D_n$ (Kb) }
    \lbl[.]{-23,83,90; \footnotesize Energy consumption of all data sensors (J)}

    \lbl[b]{7,-9;  3 }
    \lbl[b]{54,-9;  4 }
    \lbl[b]{102,-9;  5 }
    \lbl[b]{148,-9;  6 }
    \lbl[b]{195,-9;  7 }

    \lbl[r]{4,31.5;  4 }
    \lbl[r]{4,61.5;  8 }
    \lbl[r]{4,93;  12 }
    \lbl[r]{4,125;  16 }
    \lbl[r]{4,157;  20 }

    \lbl[r]{35,167; $\times 10^{-6}$ }
    \end{lpic}
    	\caption{A comparison of JTPA and the $p_{max}$ scheme for a range of data amounts.}
       \label{Fig:ACS_power}
  \end{figure}

  \begin{figure}[h]
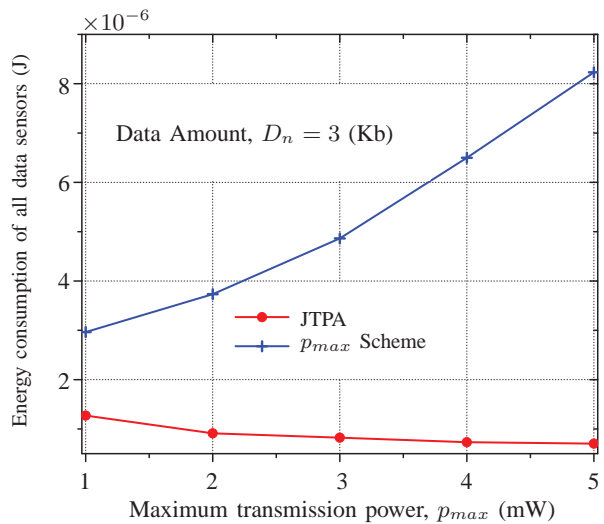

  \centering
    \begin{lpic}[l(10mm),r(5mm),t(5mm),b(5mm)]{Figure_14(0.36,0.36)} 
    \small
    \lbl[l]{85,54; \footnotesize  JTPA}
    \lbl[l]{85,45; \footnotesize $p_{max}$ Scheme}
    \lbl[l]{17,122;  Data Amount, $D_n=3$ (Kb) }
    \lbl[.]{102,-17; Maximum transmission power, $p_{max}$ (mW) }
    \lbl[.]{-18,83,90; \footnotesize  Energy consumption of all data sensors (J)}

    \lbl[b]{7,-9;  1 }
    \lbl[b]{54,-9;  2 }
    \lbl[b]{102,-9;  3 }
    \lbl[b]{148,-9;  4 }
    \lbl[b]{195,-9;  5 }

    \lbl[r]{3,31;  2 }
    \lbl[r]{3,68;  4 }
    \lbl[r]{3,104;  6 }
    \lbl[r]{3,140;  8 }

    \lbl[r]{33,164; $\times 10^{-6}$ }
    \end{lpic}
    	\caption{A comparison of JTPA and the $p_{max}$ Scheme for various $p_{max}$ values.}
        \label{Fig:ACS_dataamount}
\end{figure}


In Figs. \ref{Fig:ACS_power} and Fig. \ref{Fig:ACS_dataamount}, we compare the energy consumption of data transmission under the JTPA algorithm and the $p_{max}$ scheme. In the $p_{max}$ scheme, the data sensors transmit at the maximum available power $p_{max}$, and the transmission time is determined by solving the linear programming problem JTPA-1. The $p_{max}$ scheme is comparable to the channel allocation scheme proposed in \cite{Byun2008}, in which data sensors transmit data at fixed transmission power. Fig. \ref{Fig:ACS_power} shows the comparison of the energy consumption performance with respect to various required amount of data, while $p_{max}$ is set to 5 mW. Because the JTPA algorithm jointly allocates the transmission time and power over the available channels, JTPA consumes less energy than $p_{max}$ scheme for different data amount. Fig. \ref{Fig:ACS_dataamount} shows the comparison of the JTPA algorithm against the $p_{max}$ scheme for various values of $p_{max}$ and data amount $D_n = 3$ Kb $\forall n$. The energy consumption of the JTPA algorithm decreases with an increase in $p_{max}$ because data sensors can adjust the transmission power in a larger space for a larger $p_{max}$. Similar to the results shown in Fig. \ref{Fig:ACS_power}, JTPA consumes less energy than that of the $p_{max}$ scheme due to the joint allocation of transmission time and power.

\section{Conclusions} \label{sec:conlusion}

In this paper, a novel resource allocation solution for heterogeneous cognitive radio sensor networks (HCRSNs) has been proposed. The proposed solution assigns channels to spectrum sensors in such a way that the detected available time of the channels is maximized. Furthermore, it efficiently allocates the available channels to the data sensors along with the transmission time and power in order to prolong their lifetime. Extensive simulation results have demonstrated the optimality and efficiency of the proposed algorithms. The solution presented in this work enables using primary networks channels efficiently while adapting in real time to the availability of harvested energy, and optimizes the allocation of the battery-powered data sensors' scarce resources. This yields significantly higher spectral and energy-efficient HCRSNs.

For the future work, we plan to investigate the channel allocation and routing protocol design in EH-aided multi-hop HCRSNs, considering the time-varying EH rate and the adaptive detection threshold of sensors.

\section*{Acknowledgment}
This work was supported by the Fundamental Research Funds for the Central Universities of the Central South University (No. 2013zzts043). The project was supported partially by the Kuwait Foundation for the Advancement of Sciences under project code: P314-35EO-01. This work was also supported by National Natural Science Foundation of China (61379057, 61272149) and NSERC, Canada.

\appendices

\section{Derivation of the collision probability $p_{coll}^k(\bar{\alpha}^{k})$}
\label{Apn_col_prob}

Let $T_{\mbox{\footnotesize \textit{Inactive}}}^k$ be the sojourn time of a OFF/Inactive period with the probability density function (p.d.f) $f_{T_{\mbox{\footnotesize \textit{Inactive}}}^k}(\alpha)$. Given the exponentially distributed ON/OFF period, the p.d.f of the Inactive period is equal to \cite{Kim2008}
\begin{equation} \notag
f_{T_{\mbox{\footnotesize \textit{Inactive}}}^k}(\alpha) = \mu_k e^{-\mu_k \alpha}
\end{equation}
The probability that the OFF/Inactive period is less than $\bar{\alpha}^k$, i.e., the PU on channel $k$ returns in [0, $\bar{\alpha}^k$], can be derived to be
\begin{equation} \notag
\begin{split}
Pr(T_{\mbox{\footnotesize \textit{Inactive}}}^k < \bar{\alpha}^k) =& \int_0^{\bar{\alpha}^k} f_{T_{\mbox{\footnotesize \textit{Inactive}}}^k}(\alpha) \ud \alpha \\
=& 1 - e^{-\mu_k \bar{\alpha}^k}.
\end{split}
\end{equation}
Since channel $k$ is available with probability $P_{\mbox{\footnotesize \textit{Inactive}}}^k$, the probability of collision $p_{coll}^k(\bar{\alpha}^{k})$ is given by:
\begin{equation} \notag
p_{coll}^k(\bar{\alpha}^{k}) = P_{\mbox{\footnotesize \textit{Inactive}}}^k \cdot (1-e^{-\mu_k \bar{\alpha}^{k}}).
\end{equation}

\section{Bi-convexity of the DSRA problem}
\label{appendix-biconvex}
In the following we show that the DSRA problem is bi-convex.
\begin{theorem} \label{thm_biconvex}
If we fix one set of variables in $\bf{T}$ or $\bf{P}$, then DSRA is convex with respect to the other set of variables. Thus, DSRA is biconvex.
\end{theorem}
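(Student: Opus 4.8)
The plan is to verify the definition of biconvexity from \cite{Gorski2007} directly: I must check that the objective is a biconvex function and that the feasible region is a biconvex set, which amounts to fixing one of the variable blocks ${\bf T}$ or ${\bf P}$ and establishing convexity of the resulting problem in the remaining block. I would therefore split the argument into the two symmetric cases and treat the constraints one by one.

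First I would fix ${\bf P}$ and read DSRA as a problem in ${\bf T}$. Since every $p_{n,k}$ is now a constant, the objective $\sum_{n=1}^N \sum_{k=1}^{\bar{K}} t_{n,k} p_{n,k}$ is linear in ${\bf T}$. Constraints (\ref{jatp_b}), (\ref{jatp_c}) and (\ref{jatp_e}) are affine in ${\bf T}$ by inspection. In the rate constraint (\ref{jatp_d}) each factor $W\log_2(1+\delta_{n,k}p_{n,k})$ is a nonnegative constant, so the left-hand side is linear in ${\bf T}$ and the constraint carves out a half-space. The ${\bf P}$-section of the feasible set is thus a polyhedron and the objective is linear, so DSRA restricted to ${\bf T}$ is a linear program and hence convex.

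Next I would fix ${\bf T}$ and read DSRA as a problem in ${\bf P}$. The objective is again linear, now in ${\bf P}$, and the box constraint (\ref{jatp_f}) is convex. The only nontrivial step is the rate constraint (\ref{jatp_d}). Here I would use that $p \mapsto \log_2(1+\delta_{n,k}p)$ is concave, being the composition of the concave logarithm with an affine map, and that the weights $t_{n,k}$ and $W$ are nonnegative; a nonnegative weighted sum of concave functions is concave, so the entire left-hand side $\sum_{k=1}^{\bar{K}} t_{n,k} W\log_2(1+\delta_{n,k}p_{n,k})$ is concave in ${\bf P}$. Since the superlevel set $\{\,{\bf P} : g({\bf P}) \geq D_n\}$ of a concave function $g$ is convex, the ${\bf T}$-section of the feasible region is convex, and with a linear objective DSRA restricted to ${\bf P}$ is convex.

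Combining the two cases shows that the objective is biconvex (in fact bilinear) and the feasible set is biconvex, so DSRA is biconvex. I expect the main obstacle to be nothing more than recognizing the asymmetry of the rate constraint (\ref{jatp_d}): it is affine in ${\bf T}$ but merely concave in ${\bf P}$, so convexity of its ${\bf T}$-section must be argued through the concave-superlevel-set property rather than through the polyhedrality used in the ${\bf T}$ case.
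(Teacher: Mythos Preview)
Your proposal is correct and follows essentially the same two-case decomposition as the paper: fix ${\bf P}$ to obtain a linear program in ${\bf T}$, then fix ${\bf T}$ to obtain a convex program in ${\bf P}$. In fact your argument is more explicit than the paper's, which merely asserts that DSRA-2 is convex and solvable by the interior point method without spelling out the concave-superlevel-set reasoning for constraint~(\ref{jatp_d}) that you supply.
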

\begin{proof}
We first determine a feasible $\bf{P}$, and then, DSRA becomes a problem of determining $\bf{T}$ to satisfy
\begin{subequations} \notag
\label{opt_givenP}
\begin{align}
 \mbox{(DSRA-1)}~&\min_{\bf{T}}
   \begin{aligned}[t]
        \sum_{n=1}^N \sum_{k=1}^{\bar{K}} t_{n,k} p_{n,k}
   \end{aligned}  \\
   \text{s.t.~~} &
     (\ref{jatp_b}) (\ref{jatp_c}) (\ref{jatp_d}) (\ref{jatp_e}). \notag
\end{align}
\end{subequations}
which is linear and convex due to the linear objective function and linear feasible set. DSRA-1 can be solved using the Simplex method \cite{Dantzig2003}. Additionally, by fixing $\bf{T}$, DSRA becomes a problem of determining $\bf{P}$ to satisfy
\begin{subequations} \notag
\label{opt_givenT}
\begin{align}
\mbox{(DSRA-2)}~ &\min_{\bf{P}}
   \begin{aligned}[t]
        \sum_{n=1}^N \sum_{k=1}^{\bar{K}} t_{n,k} p_{n,k}
   \end{aligned}  \\
   \text{s.t.~~} &
     (\ref{jatp_d}) (\ref{jatp_f}), \notag
\end{align}
\end{subequations}
DSRA-2 can be solved by the interior point method. Both DSRA-1 and DSRA-2 are convex and can be solved efficiently. Therefore, the objective function $\sum_{n=1}^N \sum_{k=1}^{\bar{K}} t_{n,k} p_{n,k}$ is biconvex on the feasible set which makes DSRA a biconvex problem.
\end{proof}

\ifCLASSOPTIONcaptionsoff
  \newpage
\fi

\begin{thebibliography}{99}

\bibitem{Borges2014}
L.~Borges, F.~Velez, and A.~Lebres, ``Survey on the characterization and classification of wireless sensor network applications,'' \emph{IEEE Commun. Surveys Tuts}, vol.~16, no.~4, pp. 1860--1890, 2014.

\bibitem{Akan2009}
O.~Akan, O.~Karli, and O.~Ergul, ``Cognitive radio sensor networks,'' \emph{IEEE Network}, vol.~23, no.~4, pp. 34--40, 2009.

\bibitem{Zhang2014}
N.~Zhang, H.~Liang, N.~Cheng, Y.~Tang, J.~Mark, and X.~Shen, ``Dynamic spectrum access in multi-channel cognitive radio networks,'' \emph{IEEE J. Select. Areas Commun.}, vol.~32, no.~11, pp. 2053--2064, 2014.

\bibitem{Chang2012}
C.-C. Chang and H.-Y. Lin, ``Spectrum reconstruction for on-chip spectrum
  sensor array using a novel blind nonuniformity correction method,''
  \emph{IEEE Sensors J.}, vol.~12, no.~8, pp. 2586--2592, 2012.

\bibitem{Ren2015}
J.~Ren, Y.~Zhang, K.~Zhang, A.~Liu, J.~Chen, and X.~Shen, ``Lifetime and energy
  hole evolution analysis in data-gathering wireless sensor networks,''
  \emph{IEEE Trans. Ind. Informat.}, to appear,
  ~{DOI}:~10.1109/TII.2015.2411231.

\bibitem{Ren2016}
J.~Ren, Y.~Zhang, N.~Zhang, D.~Zhang, and X.~S. Shen, ``Dynamic channel access
  to improve energy efficiency in cognitive radio sensor networks,'' \emph{IEEE
  Trans. Wireless Commun.}, to appear, ~{DOI}:~10.1109/TWC.2016.2517618.

\bibitem{Su2016}
Z.~Su, Q.~Xu, and Q.~Qi, ``Big data in mobile social networks: a qoe-oriented
  framework,'' \emph{IEEE Network}, vol.~30, no.~1, pp. 52--57, 2016.

\bibitem{Zhang2015a}
Y.~Zhang, S.~He, and J.~Chen, ``Data gathering optimization by dynamic sensing
  and routing in rechargeable sensor networks,'' \emph{IEEE/ACM Trans. Netw.},
  to appear, ~{DOI}:~10.1109/TNET.2015.2425146.

\bibitem{Shah2015}
G.~Shah and O.~Akan, ``Cognitive adaptive medium access control in cognitive
  radio sensor networks,'' \emph{IEEE Trans. Veh. Technol.}, vol.~64, no.~2,
  pp. 757--767, 2015.

\bibitem{Deng2012}
R.~Deng, J.~Chen, C.~Yuen, P.~Cheng, and Y.~Sun, ``Energy-efficient cooperative
  spectrum sensing by optimal scheduling in sensor-aided cognitive radio
  networks,'' \emph{IEEE Trans. Veh. Technol.}, vol.~61, no.~2, pp. 716--725,
  2012.

\bibitem{Akyildiz2011}
I.~F. Akyildiz, B.~F. Lo, and R.~Balakrishnan, ``Cooperative spectrum sensing in cognitive radio networks: A survey,'' \emph{Physical Communication},  vol.~4, no.~1, pp. 40 -- 62, 2011.


\bibitem{Zhang2014a}
N.~Zhang, N.~Cheng, N.~Lu, H.~Zhou, J.~W. Mark, and X.~S. Shen, ``Risk-aware
  cooperative spectrum access for multi-channel cognitive radio networks,''
  \emph{IEEE J. Select. Areas Commun.}, vol.~32, no.~3, pp. 516--527, March
  2014.

\bibitem{Cheng2012}
P.~Cheng, R.~Deng, and J.~Chen, ``Energy-efficient cooperative spectrum sensing
  in sensor-aided cognitive radio networks,'' \emph{Wireless Commun.}, vol.~19,
  no.~6, pp. 100--105, 2012.

\bibitem{Zhang2015}
W.~Zhang, Y.~Guo, H.~Liu, Y.~Chen, Z.~Wang, and J.~Mitola, ``Distributed
  consensus-based weight design for cooperative spectrum sensing,'' \emph{IEEE
  Trans. Parallel Distrib. Syst.}, vol.~26, no.~1, pp. 54--64, 2015.

\bibitem{Khan2010}
Z.~Khan, J.~Lehtomaki, K.~Umebayashi, and J.~Vartiainen, ``On the selection of
  the best detection performance sensors for cognitive radio networks,''
  \emph{IEEE Signal Processing Lett.}, vol.~17, no.~4, pp. 359--362, 2010.

\bibitem{Liu2013a}
X.~Liu, B.~Evans, and K.~Moessner, ``Energy-efficient sensor scheduling
  algorithm in cognitive radio networks employing heterogeneous sensors,''
  \emph{IEEE Trans. Veh. Technol.}, vol.~64, no.~3, pp. 1243--1249, 2013.

\bibitem{Ebrahimzadeh2015}
A.~Ebrahimzadeh, M.~Najimi, S.~Andargoli, and A.~Fallahi, ``Sensor selection
  and optimal energy detection threshold for efficient cooperative spectrum
  sensing,'' \emph{IEEE Trans. Veh. Technol.}, vol.~64, no.~4, pp. 1565--1577,
  2015.

\bibitem{Eryigit2013}
S.~Eryigit, S.~Bayhan, and T.~Tugcu, ``Energy-efficient multichannel
  cooperative sensing scheduling with heterogeneous channel conditions for
  cognitive radio networks,'' \emph{IEEE Trans. Veh. Technol.}, vol.~62, no.~6,
  pp. 2690--2699, 2013.

\bibitem{Bayhan2013}
S.~Bayhan and F.~Alagoz, ``Scheduling in centralized cognitive radio networks
  for energy efficiency,'' \emph{IEEE Trans. Veh. Technol.}, vol.~62, no.~2,
  pp. 582--595, 2013.

\bibitem{Shah2014}
G.~Shah, F.~Alagoz, E.~Fadel, and O.~Akan, ``A spectrum-aware clustering for
  efficient multimedia routing in cognitive radio sensor networks,'' \emph{IEEE
  Trans. Veh. Technol.}, vol.~63, no.~7, pp. 3369--3380, 2014.

\bibitem{Byun2008}
S.-S. Byun, I.~Balasingham, and X.~Liang, ``Dynamic spectrum allocation in
  wireless cognitive sensor networks: Improving fairness and energy
  efficiency,'' in \emph{Proc. IEEE VTC}, 2008, pp. 1--5.

\bibitem{Hu2011}
Z.~Hu, Y.~Sun, and Y.~Ji, ``A dynamic spectrum access strategy based on
  real-time usability in cognitive radio sensor networks,'' in \emph{Proc. IEEE
  MSN}, 2011, pp. 318--322.

\bibitem{AyalaSolares2012}
J.~Ayala~Solares, Z.~Rezki, and M.~Alouini, ``Optimal power allocation of a
  single transmitter-multiple receivers channel in a cognitive sensor
  network,'' in \emph{Proc. IEEE ICWCUCA}, 2012, pp. 1--6.

\bibitem{Naeem2013}
M.~Naeem, K.~Illanko, A.~Karmokar, A.~Anpalagan, and M.~Jaseemuddin,
  ``Energy-efficient cognitive radio sensor networks: Parametric and convex
  transformations,'' \emph{Sensors}, vol.~13, no.~8, pp. 11\,032--11\,050,
  2013.

\bibitem{Zhang2013}
Y.~Zhang, S.~He, J.~Chen, Y.~Sun, and X.~Shen, ``Distributed sampling rate
  control for rechargeable sensor nodes with limited battery capacity,''
  \emph{IEEE Trans. Wireless Commun.}, vol.~12, no.~6, pp. 3096--3106, 2013.

\bibitem{Tehrani2012}
P.~Tehrani, L.~Tong, and Q.~Zhao, ``Asymptotically efficient multichannel
  estimation for opportunistic spectrum access,'' \emph{Signal Processing, IEEE
  Transactions on}, vol.~60, no.~10, pp. 5347--5360, 2012.

\bibitem{Kim2008}
H.~Kim and K.~Shin, ``Efficient discovery of spectrum opportunities with
  mac-layer sensing in cognitive radio networks,'' \emph{IEEE Trans. Mobile
  Comput.}, vol.~7, no.~5, pp. 533--545, 2008.

\bibitem{Liang2008}
Y.-C. Liang, Y.~Zeng, E.~Peh, and A.~T. Hoang, ``Sensing-throughput tradeoff
  for cognitive radio networks,'' \emph{IEEE Trans. Wireless Commun.}, vol.~7,
  no.~4, pp. 1326--1337, 2008.

\bibitem{Atapattu:2014:}
S.~Atapattu, C.~Tellambura, and H.~Jiang, \emph{Energy Detection for Spectrum
  Sensing in Cognitive Radio}, ser. SpringerBriefs in Computer Science.\hskip
  1em plus 0.5em minus 0.4em\relax Springer, 2014.

\bibitem{Rubinstein1999}
R.~Rubinstein, ``The cross-entropy method for combinatorial and continuous optimization,'' \emph{Methodol. Comput. Appl.}, vol.~1, no.~2,  pp. 127--190, 1999.

\bibitem{DeBoer2005}
P.-T. De~Boer, D.~P. Kroese, S.~Mannor, and R.~Y. Rubinstein, ``A tutorial on
  the cross-entropy method,'' \emph{Annals of operations research}, vol. 134,
  no.~1, pp. 19--67, 2005.

\bibitem{Costa2007}
A.~Costa, O.~D. Jones, and D.~Kroese, ``Convergence properties of the
  cross-entropy method for discrete optimization,'' \emph{Operations Research
  Letters}, vol.~35, no.~5, pp. 573--580, 2007.

\bibitem{Floudas1990}
C.~Floudas and V.~Visweswaran, \emph{Comput. Chem. Eng.}, vol.~14, no.~12, pp.
  1397 -- 1417, 1990.

\bibitem{Gorski2007}
J.~Gorski, F.~Pfeuffer, and K.~Klamroth,
  ``Biconvex sets and optimization with biconvex
  functions: a survey and extensions,''
  \emph{Math. Method. Oper. Res.}, vol.~66,
  no.~3, pp. 373--407, 2007.

\bibitem{Dantzig2003}
G.~B. Dantzig and M.~N. Thapa, ``Linear programming 2: Theory and extensions,''
  vol.~2, 2003.

\bibitem{Boyd2004}
S.~Boyd and L.~Vandenberghe, ``Convex optimization,'' \emph{Cambridge
  university press}, 2004.

\bibitem{Pei2011}
Y.~Pei, Y.-C. Liang, K.~Teh, and K.~H. Li, ``Energy-efficient design of
  sequential channel sensing in cognitive radio networks: Optimal sensing
  strategy, power allocation, and sensing order,'' \emph{IEEE J. Select. Areas
  Commun.}, vol.~29, no.~8, pp. 1648--1659, 2011.

\bibitem{Shu2006}
T.~Shu, M.~Krunz, and S.~Vrudhula, ``Joint optimization of transmit power-time
  and bit energy efficiency in cdma wireless sensor networks,'' \emph{IEEE
  Trans. Wireless Commun.}, vol.~5, no.~11, pp. 3109--3118, 2006.

\bibitem{Yu2011}
H.~Yu, W.~Tang, and S.~Li, ``Optimization of cooperative spectrum sensing in
  multiple-channel cognitive radio networks,'' in \emph{Proc. IEEE GLOBECOM},
  2011, pp. 1--5.
\end{thebibliography}
\end{document}